\newcommand{\keywords}[1]{\par\addvspace\baselineskip
\noindent\keywordname\enspace\ignorespaces#1}
\begin{document}

\mainmatter  % start of an individual contribution

% first the title is needed
\title{The Crowdfunding Game}

% a short form should be given in case it is too long for the running head
\titlerunning{The Crowdfunding Game}

% the name(s) of the author(s) follow(s) next
%
% NB: Chinese authors should write their first names(s) in front of
% their surnames. This ensures that the names appear correctly in
% the running heads and the author index.
%
\author{Itai Arieli%
\and Moran Koren\textsuperscript{(\Letter)}  \and Rann Smorodinsky\thanks{Research supported by GIF research grant no. I-1419-118.4/2017, ISF grant 2018889, Technion VPR grants, the joint Microsoft-Technion e-Commerce Lab, the Bernard M. Gordon Center for Systems Engineering at the Technion, and the TASP Center at the Technion.}}
\authorrunning{The Crowdfunding Game}
% (feature abused for this document to repeat the title also on left hand pages)

% the affiliations are given next; don't give your e-mail address
% unless you accept that it will be published
\institute{Faculty of Industrial Engineering,\\ Technion - Israel Institute of Technology. Haifa, Israel.
\mailsa\\
}

%
% NB: a more complex sample for affiliations and the mapping to the
% corresponding authors can be found in the file "llncs.dem"
% (search for the string "\mainmatter" where a contribution starts).
% "llncs.dem" accompanies the document class "llncs.cls".
%

\toctitle{Lecture Notes in Computer Science}
\tocauthor{Authors' Instructions}
\maketitle

\begin{abstract}
The recent success of crowdfunding for supporting new and innovative products has been overwhelming with over 34 Billion Dollars raised in 2015. 
In many crowdfunding platforms, firms set a campaign threshold and contributions are collected only if this threshold is reached.	 During the campaign, consumers are uncertain as to the ex-post value of the product, the business model viability, and the seller's reliability. Consumer who commit to a contribution therefore gambles. This gamble is effected by the campaign's threshold. Contributions to campaigns with higher thresholds are collected only if a greater number of agents find the offering acceptable. Therefore, high threshold serves as a social insurance and thus in high-threshold campaigns, potential contributors feel more at ease with contributing.   We introduce the crowdfunding game and explore the contributor's dilemma in the context of experience goods. We discuss equilibrium existence and related social welfare, information aggregation and revenue implications.

\keywords{Crowdfunding, Voting, Equilbrium, Game Theory}
\end{abstract}

\section{Introduction}

The evolution of the `sharing economy' has  made it possible for the general public to invest in early-stage innovative and economically risky projects and products. The success of this fund raising scheme, known as `crowdfunding', is huge. In 2015 over 34 Billion dollars have been raised while predictions for 2017 are as high 90 Billion dollars.%
\footnote{Figures taken from \url{http://crowdexpert.com/crowdfunding-industry-statistics}.}.

Crowdfunding is often used for products or projects in a nascent stage. In a typical campaign the fund raiser (the firm) presents the concept for the product via some video or images of a product mock-up.%
\footnote{Musicians, for example, present rough cuts of songs they plan to include in some future album for which they raise funds. Fringe theater producers would typically provide the basic story and nothing more.}
Next, the firm posts a price (or menu of prices) for the product that is implicitly assumed lower than the future market price. When buying a product the consumer faces two types of risks. First, whether the firm will have enough funds to produce and deliver the product and second, the quality and value of the product is unknown at the time of the campaign.%
\footnote{Empirical studies show that approximately $8\%$ of successful crowdfunding campaigns result in disappointing products while $20\%$ never deliver (\cite{Mollick2015}).}
Buying a product via a crowdfunding campaign is essentially buying a pig in a poke. To mitigate some of these risks the firm sets a campaign goal which serves as a threshold for implementing the campaign. Contributions are collected only if they exceed the goal, which in turn forms some kind of insurance to the potential buyers. This approach is standard in many on-line crowd-funding platform such as Kickstarter  and Indiegogo among others.

Given the nature of many of the products in crowdfunding platforms it is often the case that their value is not known to the firm itself. Therefore crowdfunding campaign serve a double purpose for firms. Beyond raising the money required to move beyond the drawing board stage, many firms use crowdfunding as a litmus test for appreciating the demand side and the value of their offering. From a firm's perspective profits are expected only after the campaign and only if their product is appealing. 

In recent years the methodology known  of {\em lean start-up} has gained popularity among entrepreneurs. A central building block of this concepts requires entrepreneurs to learn the validity of their product offering and the demand side in the earliest stage possible and crowdfunding is often proposed as a means for doing just that. From the firms point of view the campaign serves as a decision tool. The company pursues the new offering if and only if the campaign surpasses the threshold. In this aspect the interest of the firm and the buyers are perfectly aligned and all would like the move forward with the product whenever its value is high. Our primary subject of investigation is whether the crowdfunding campaign format, and in particular the notion of a campaign threshold, serves the players.

The campaign threshold introduces an interesting strategic interaction. A low threshold implies low traction which may result in the firm moving ahead with the product in-spite of insignificant demand. On the other hand, a high goal implies that the product will only be realized if a large portion of the population finds it desirable. In this case, even consumers with discouraging information about the product will likely buy the product because they `free ride' on the insurance induced by the high threshold which `guarantees' that the firm will collect funds only if the traction is high, which is correlated with the actual value of the product. However, if the consumers with discouraging demand flock to buy the product then they pull the rug from under the premise of the campaign threshold as a form of insurance.

In this paper we focus on the analysis of the insurance component. We propose a model where the value of a product is unknown to all parties. Potential buyers receive signals that are correlated with the true value and make their buying decision accordingly. We introduce a model of a crowdfunding campaign that allows us to study the how well can such campaigns sieve the high value products from the low ones.

The model we propose is stripped away from many aspects of crowdfunding campaigns in order to focus on the interplay between the product quality, the buyers information structure and the success of a campaign. In particular we study a common value setting where the product can have one of two values: $-1$ or $1$. Conditional on the value of the product, potential contributors receive an IID symmetric signal. These potential contributors then either buy the risky product at its ex-ante expected price (which is zero) or not buy and receive for value of the non-risky outcome that is set to zero. In particular, we strip the model away the dynamic nature that is inherent to many crowdfunding campaign platforms.

Due to the simplicity of the model it is applicable to other settings, beyond crowdfunding. Consider, for example, the evolution of multi-national institutions such as the UN's International Court of Justice in Hague, the Kyoto Protocol or the Geneva Conventions. To form these institutions a proposal is made and then states can decide whether to join the institution or not. The institutions eventually form, as with our crowdfunding game, only whenever sufficiently many states decided to participate. In a similar vein, the formation of industry standards can be modeled as a crowdfunding game. In future work we plan to study how robust our observations are to the modeling choice. In particular we plan to go beyond the symmetric information setting and beyond the static setting.

\subsection{Toy model and motivating example}

A population of $N$ potential buyers (or players) and a threshold $B$ induce a game of incomplete information. Hereafter we refer to this induced game as the `crowdfunding game'. In this game, Nature randomly chooses one of two states $\Omega=\{H,L\}$ with equal probabilities. If $\omega=H$, then the product is good and the business model is viable. If $\omega=L$, then the product will be disappointing. The realized state of nature is unknown to the players, and determines the value of the proposed good, $v \in \{-1,1\}$.  Given the state of nature each player $i$ receives a signal $s_i \in\{H,L\}$ where $Prob(s_i=\omega) > 0.5$. Signals are IID across players conditional of the state of nature.  After receiving the signals players choose simultaneously whether to contribute (buy the product) or not. The utility of players that choose not to buy is zero, whereas for those  who choose to buy it is $v$, if there are at least $B$ players who whose to buy and $0$ if the number of buyers is less than $B$. In the latter case all players receive a utility of zero.

As we state above, the crowdfunding mechanism provides the players with social insurance. To clarify this argument and better understand our crowdfunding game consider the following two examples. Assume there are three players $|N|=3$ and that the probability of a player receiving the correct signal is $p=0.75$. In this setting we present two scenarios. The first, a `benchmark' where $B=1$, i.e, a single contribution is sufficient for production. In this case every player knows that if she pledges a contribution the product will be supplied, and a second scenario, in which $B=3$ and the product will be supplied only if all players choose the ``buy" action.

In the `benchmark' case, where $B=1$, the decision of each buyer is binding as a single purchase decision is enough to ensure that the product be supplied.  The expected utility of a player with a good signal, $s_i=H$ will be
$$ Pr(\omega=H|s_i)-Pr(\omega=L|s_i)=p-(1-p)$$
and as $p>\frac{1}{2}$, player $i$ who received a signal $s_i=H$ will surely contribute. Similarly, a player with a bad signal will receive a utility of $(1-p)-p<0$ and therefore will surely choose her dominant strategy of opting-out. However in the second case, where $B=3$, players strategies are less trivial. To see that the vector of pure strategies mentioned above no longer constitutes an equilibrium, assume to the contrary that there exists an equilibrium where player with a good signal surely commits and a player with a bad signal opts out.  Now we claim that the player with a bad signal has a profitable deviation. We denote this player with $l\in N$. Since all other players play according to their signals, the product will be supplied only if the two other agents received a high signal. In this case, $l$'s optimal strategy is to buy as well.  Note that in all other cases, player $l$ receives a utility of $0$ as the product will not be supplied. Therefore if $B=3$ and all other players play the pure strategy mentioned above, player $l$ assumes no risk if she pledges to buy, hence opting-out is now dominated by any other strategy in which player $l$ assigns a positive probability to buying, a contradiction to the assumption that the proposed vector of pure strategy is an equilibrium.  In this case, the only symmetric, non-trivial equilibrium is one where players who receive a good signal surely buy and those who receive a bad signal plays a mixed strategy. When $n=3$, $B=3$ and $p=0.75$, player $l$ will assigning a probability of $\lambda=0.302$ to the ``buy" action. In this equilibrium there is a probability of $0.027$ that the product will be supplied even though all three players received low signals.  We show that the intuition and results of these examples hold in a more general settings and provide key insights into the pros and cons of the popular crowdfunding scheme.

\subsection{Main findings}

Recall that a primary goal of firms engaging in crowdfunding is to gauge the viability of a product. This sanity check is typically used in two ways. It is either used internally to make a Go/No-Go decision or it is used, externally. A successful campaign can serve as a means of attracting further and more substantial investments from potential (institutional) investors. To capture these two, inherently different, goals, we introduce two measures of success for a crowdfunding game:
\begin{itemize}
	\item
	The {\em correctness index} of a game is defined as the probability that the game ends up with a the correct decision. That is, the probability the product is funded when its value is high or the probability that the product is rejected when its value is low. The correctness index measures how well the crowdfunding aggregates the private information from the buyers in order to make sure the firm pursues the product only when it is viable.
	\item
	The {\em market penetration index} is the expected number of buyers provided that the product is supplied, i.e, the threshold is surpassed. This number serves as a proxy for success of the campaign as a means to attract further investments.
\end{itemize}

Our results provide limits on the success, in both aspects, of {\bf large} crowdfunding games\footnote{However in Section \ref{sec:finite_tab} we examine our setting in finite populations as well.}. We state and prove three results:
\begin{itemize}
	\item
	We provide a constrictive proof for the existence of a symmetric, non-trivial equilibrium and we show it is unique. In every such equilibrium players with a high signal surely contribute while those with a low signal either decline or take a mixed strategy whereby they contribute at a positive probability, strictly less than one.
	\item
	In large games, we provide a tight bound on the  correctness index which is strictly less than one. Thus, no matter how the campaign goal is set, full information aggregation cannot be guaranteed. We compare this with the efficiency guarantees of majority voting implied by Condorcet Jury Theorem.
	\item
	I large games, we provide a bound on the penetration index and we show that by setting the champaign goal optimally the resulting market penetration is higher than the benchmark case where the campaign goal is set to a single buyer ($B=1$).
\end{itemize}

\subsection{Related Literature}

The lion's  share of the literature on crowdfunding take an empirical approach according to which parameters of statistical models are calibrated to fit with an empirical data from various
crowdfunding platforms (e.g., \cite{Lei2017},\cite{Ellman2014}, and \cite{Yang2016}).
In these papers crowdfunding platforms are modeled as complex systems and simulations are used  to study the underlying dynamics among the various participants: entrepreneurs, funders, and crowdfunding platforms. Moreover, in these papers there are no strategic considerations or equilibrium analysis, which are at the heart of our model.
%For example, in \cite{Yang2016}, the decision of every agent is determined by the historic success rate of previous projects and what other agents are anticipated to do is irrelevant. In the simulation based approach the outcomes depend on various lotteries embedded in the simulation's engagement rules.

Several other empirical papers provide statistical and econometric analysis based on data provided from crowdfunding platforms.  Hemer \cite{Hemer2011} provides a rigorous survey on the principles of the crowdfunding mechanism and how these principals are reflected in the data of ten different crowdfunding platforms from different countries.  Yum et al. \cite{Yum2012}, based on data from a Korean platform, `Popfinding.com',  argues that firms use the crowdfunding platform as a means for information gathering. Mollick analyzed a survey conducted by Kickstarter and found that most campaigns reach extreme results, i.e., very few pledges or a massive over subscription \cite{Mollick2014}. In  \cite{Mollick2015} the author use a survey of over $47,000$ backers to conclude that about $9$ percent of successful kickstarter campaigns never deliver. Our theoretical study complements these findings as we show that the asymptotic correctness of crowdfunding campaigns is bounded away from $1$, thus there is a positive probability that `bad product' will slip through the filter.

The vast majority of the theoretical work on crowd-funding focuses on studying the decision of the entrepreneur (see for example \cite{Chemla2016},\cite{Strausz2017} and \cite{Kumar2017}). In \cite{Strausz2017} Strausz studies the vulnerability of crowd-funding platforms to entrepreneurial moral hazard. The paper studies a model where an entrepreneur who faces demand uncertainty elicit funds via a crowdfunding campaign. If the campaign is successful, it may embezzle a part of the invested sum or produce the product. Consumers posses private information as well as personal valuation for the proposed good and may choose whether or not to commit to purchase the good. The paper then examines if one can design an efficient mechanism in which the entrepreneur behave honestly. The main result shows that this occurs only if the expected returns exceed the investment costs.  Chemka and Tinn \cite{Chemla2016} take similar approach and compare two common crowdfunding mechanisms ``All-or-Nothing (AoN)" and ``Keep-it-All (KiA)". In AoN, as in our model, funds are collected only if it reached the pre-determined threshold. While in the KiA, the firm collects all accumulated funds. The paper shows that the AoN scheme dominates over the KiA scheme in terms of efficiency and is less vulnerable to moral hazard. In \cite{Kumar2017}, the authors present a model in which the firm can finance a product by traditional channels or by crowdfunding. If crowdfunding is chosen, then the firm can utilize price discrimination to extract surplus from high-demand consumers. They find that if external finance is more expensive, then the level of price discriminations decreases and thus reducing the cost of capital may cause inefficient allocation as the firm increase prices.

We take a complementary approach and focus our analysis on the demand side of the market. In our model, the utility can be either positive or negative (for example in  case of a dishonest seller). Consumers receive state-dependent signal over the aforementioned state realization and choose whether to commit a contribution. Our work is closest to Alaei, Malekian and Mostagir \cite{Alaei2016}. In their model buyers have a private value and valuations across players are IID. The focus of the paper is on the dynamic aspect of a crowdfunding campaign as they consider a setting in which players take actions sequentially and irreversibly. The main result provides an elegant explanation for the empirical finding whereby crowdfunding campaigns are either extreme successes (with over subscription) or fail miserably (see \cite{Mollick2014}). 
%We study the efficiency in which the crowdfunding platform aggregates information from the individual agents. Our setting is one where the proposed product has an unknown common value. We focus on  issues related to the aggregation of information. These issues have been overlooked in the past and are ignored in \cite{Alaei2016}.

Another related line of research is the Condorcet model, where a set of players with private information 
vote simultaneously in order to choose an alternative from a given set (typically, of size two).  As in our model, the alternatives have a state-dependent common value and the agents receive private information about the state.  According to the Condorcet Jury theorem, using the Law of Large Numbers, one can show that if voters vote naively (`truthfully') then the aggregated decision made under the majority rule is asymptotically optimal when the population of voters grows.
Austen-Smith and Banks \cite{Austen-Smith1996} challenge this premise by noting that naive voting is not necessarily rational (it does not form an equilibrium). Mclennan \cite{McLennan1998} provides an alternative framework where Condorcet's asymptotic efficiency results hold in equilibrium. In our setting we find that crowdfunding may lead to an inefficient outcome with a positive probability that is bounded away from zero, even when the population size goes to infinity. Thus we show that the crowdfunding mechanism does not satisfy efficiency as in Condorcet Jury Theorem.

The paper is organized as follows. In Section \ref{sec:model} we present the crowdfunding game and Theorem \ref{thm:unique_eq} which states that in this game we have a unique symmetric equilibrium. In Section \ref{sec:asympt} we examine the efficiency of the collective decision as the size of the population grows.
In Section \ref{sec:finite} we examine applicability of these results to finite populations.
Section \ref{sec:conc} concludes.

\section{The Crowdfunding Game}\label{sec:model}

A crowdfunding game, $\Gamma(B,n)$, is a game of incomplete information and common value with $n$ players. Let $\Omega=\{H,L\}$ denote the set of states of nature. Each state is drawn with probability $\frac{1}{2}$. Conditional on the realized state $\omega$, a private signal $s_i\in S_i=\{H,L\}$ is drawn independently for every player $i$. We assume $Pr(s_i=\omega|\omega)=p$ for some $p\in(\frac{1}{2},1)$. Each player $i$ has a binary action set, $A_i= \{0,1\}$,
with $a_i=1$ representing a decision to commit to buying the product if it is eventually supplied and $a_i=0$ represents a decision to opt-out and not to buy the product.  The utility of every player $i\in N$ is defined as follows
\begin{equation}\label{eq:consumer_util}
u_i(a_i,a_{-i},\omega)=\begin{cases}
1&\mbox{ if }a_i=1\mbox{ and }\sum_{j\in N} a_j \geq B\mbox{ and }\omega=H\\
-1&\mbox{ if }a_i=1\mbox{ and }\sum_{j\in N} a_j \geq B\mbox{ and }\omega=L\\
0&\mbox{ otherwise}
\end{cases}.
\end{equation}
In words, if player $i$ chooses to not to buy the product, she receives a utility of $0$. If she chooses to buy, then her utility is determined by the actions of the other players and the state of nature. If the number of other players who chose to buy is strictly lower than $B-1$, then the product will not be supplied and thus  player $i$'s utility will be $0$. If there are at least $B-1$ other players who choose to buy the product, then her utility is determined by the state of nature and equals $1$ in state  $H$ and $-1$ in state $L$.

A strategy for player $i$ is a mapping $\sigma_i:S_i\rightarrow \Delta A_i$. For simplicity we identify
$\sigma_i(s)$ with the probability that $i$ assigns to action $1$ conditional on signal $s$.

\subsection{Equilibrium}

A \emph{Bayes-Nash equilibrium} is a strategy profile $\sigma$ such that
$$E_{\sigma}(u_i(\sigma_i(s_i),\sigma_{-i}(s_{-i}))) \ge  E_{\sigma}(u_i(f(s_i),\sigma_{-i}(s_{-i})))
\ \  \forall i,\ \ \forall f:S_i\to A_i.$$

In the crowdfunding game, whenever $B>1$, there is a trivial equilibrium in which all players choose to opt-out. In order to avoid such equilibria we restricting attention to equilibria for which there is a positive probability that the good be supplied:
\begin{definition}
	A strategy profile (in particular an equilibrium strategy profile) $\sigma=(\sigma_1,\dots,\sigma_i,\dots,\sigma_{n})$ is called \textit{non-trivial} if, $$Pr_{\sigma}(\sum_{i} a_i \geq B)>0.$$
\end{definition}

We furthermore restrict attention to symmetric equilibria:
\begin{definition}
	A strategy profile (in particular an equilibrium strategy profile) is called \textit{symmetric} if there exists a strategy $\sigma^*$ such that $\sigma_i=\sigma^*$ for every player $i\in N$.
\end{definition}

We can now state our first result.
\begin{theorem}\label{thm:unique_eq}
	For every $n$ and every $B\in\{1\dots n\}$, there exists a \emph{unique} symmetric non-trivial Bayesian Nash equilibrium $\sigma=(\sigma_1,\ldots,\sigma_n)$ of $~\Gamma(B,n)$. Moreover, $\sigma_i$ has the following form,
	\begin{equation}\label{eq_equilibrium}
	\sigma_i(s_i)=\begin{cases}
	1&\mbox{ if }s_i=H\\
	\lambda=\lambda(B,n)\in[0,1) &\mbox{ if }s_i=L.
	\end{cases}.
	\end{equation}
\end{theorem}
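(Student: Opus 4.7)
I would begin by rewriting a player's payoff. For signal $s\in\{H,L\}$, the gain from buying over opting out is $\Pr(\omega=H\mid s)\,q_H-\Pr(\omega=L\mid s)\,q_L$, where $q_\omega=\Pr(\sum_{j\neq i}a_j\ge B-1\mid\omega)$ under the opponents' symmetric profile. The first claim is that any non-trivial symmetric equilibrium has $\sigma^*(H)=1$. If $\sigma^*(H)=0$ and $\sigma^*(L)=\lambda>0$, then the per-opponent probability of buying is strictly higher in state $L$ than in state $H$, so $q_L>q_H$ and the low-signal deviation payoff $(1-p)q_H-p\,q_L$ is strictly negative, contradicting $\lambda>0$. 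If $\sigma^*(H)\in(0,1)$, the high-signal indifference gives $q_L/q_H=p/(1-p)>1$, which in turn contradicts each of $\sigma^*(L)=0$ (which forces $q_H>q_L$), $\sigma^*(L)\in(0,1)$ (which forces the additional identity $q_L/q_H=(1-p)/p$, hence $p=1/2$), and $\sigma^*(L)=1$ (which forces $q_L/q_H\le(1-p)/p<1$).

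With $\sigma^*(H)=1$ in hand, let $\mu_H(\lambda)=p+(1-p)\lambda$, $\mu_L(\lambda)=(1-p)+p\lambda$, $q_\omega(\lambda)=\Pr(\mathrm{Bin}(n-1,\mu_\omega(\lambda))\ge B-1)$, and define
\[
g(\lambda)=(1-p)\,q_H(\lambda)-p\,q_L(\lambda).
\]
An equilibrium with $\sigma^*(L)=\lambda$ requires either $\lambda=0$ with $g(0)\le 0$, $\lambda\in(0,1)$ with $g(\lambda)=0$, or $\lambda=1$ with $g(1)\ge 0$. Since $\mu_H(1)=\mu_L(1)=1$, one computes $g(1)=1-2p<0$, which immediately rules out $\lambda=1$. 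Existence of an equilibrium in $[0,1)$ then follows by continuity of $g$: if $g(0)\le 0$, then $\lambda=0$ works; otherwise, the intermediate value theorem produces a zero in $(0,1)$.

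Uniqueness is the heart of the argument, and I would establish it by proving the single-crossing property $g(\lambda)\le 0\Rightarrow g'(\lambda)<0$ on $[0,1)$ (for $B\ge 2$; the case $B=1$ gives $g\equiv 1-2p<0$ and is trivial). Using the identity $\tfrac{d}{d\mu}\Pr(\mathrm{Bin}(n-1,\mu)\ge B-1)=(n-1)\binom{n-2}{B-2}\mu^{B-2}(1-\mu)^{n-B}$, the sign of $g'(\lambda)$ matches the sign of $(1-p)^2\rho^{B-2}\tau^{n-B}-p^2$, where $\rho=\mu_H/\mu_L\ge 1$ and $\tau=(1-\mu_H)/(1-\mu_L)=(1-p)/p<1$ (a direct simplification). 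The monotone likelihood ratio property of the binomial family (the ratio $\rho^k\tau^{n-1-k}$ is strictly increasing in $k$) yields $q_H/q_L\ge\rho^{B-1}\tau^{n-B}$, and combining this with the hypothesis $q_H/q_L\le p/(1-p)$ (equivalent to $g(\lambda)\le 0$) gives $\rho^{B-2}\tau^{n-B}\le p/\bigl((1-p)\rho\bigr)<p^2/(1-p)^2$, the final strict inequality coming from $\rho\ge 1>(1-p)/p$. Hence $g'(\lambda)<0$, and so $g$ admits at most one zero on $[0,1)$.

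The main obstacle is precisely this single-crossing lemma, because $g$ is not globally monotone on $[0,1]$: an elementary calculation shows that the sign of $g'(0)$ flips according to whether $B$ is smaller or larger than $(n+4)/2$, so a blanket monotonicity argument is out of reach. The decisive observation is that one does not need $g'<0$ everywhere, only at points where $g$ vanishes; the MLRP lower bound on $q_H/q_L$ is exactly the tool that converts the magnitude constraint $g\le 0$ into the sign statement $g'<0$ at each such point, which is precisely what uniqueness requires.
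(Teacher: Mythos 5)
Your proposal is correct and follows essentially the same route as the paper's proof: first pin down $\sigma^*(H)=1$ and $\sigma^*(L)<1$ in any non-trivial symmetric equilibrium, then reduce everything to the one-variable indifference function $g(\lambda)=(1-p)q_H(\lambda)-p\,q_L(\lambda)$ and prove a single-crossing property via the closed form for the derivative of the binomial tail together with the monotonicity of the likelihood ratio $\rho^k\tau^{n-1-k}$ in $k$ (the paper's $\eta(k,n-1)>0\Rightarrow\eta(k+1,n-1)>0$ step is exactly your MLRP observation). Your packaging of the key lemma in the form ``$g(\lambda)\le 0$ implies $g'(\lambda)<0$,'' obtained by combining the lower bound $q_H/q_L\ge\rho^{B-1}\tau^{n-B}$ with the hypothesis $q_H/q_L\le p/(1-p)$, is a slightly cleaner and marginally stronger statement of the paper's Proposition on single crossing, but the underlying argument is the same.
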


We relegate the proof to Appendix \ref{sec:proofs} but provide some insights about this result.

%Given the information structure it is expected that $\sigma_i(H)>\sigma_i(L)$. 
Note that the posterior probability assigned to the state $H$ by a player receiving the signal $H$  is $p>\frac{1}{2}$, so she clearly takes the action $1$. On the other hand, the posterior assigned by a player with the low signal, $L$, is  $(1-p)<\frac{1}{2}$ and, naively speaking, such a player should prefer the action $0$. This is indeed the case when $B=1$. However, this intuition fails whenever $B$ is sufficiently large. Consider for example the case where $B>\frac{n}{2}$. Assume that all players of type $L$  always opt-out and that players of type $H$ always support and play $1$ A type $L$ player knows the good will be supplied only when the number of players of type $H$ exceeds $B$, in this case she can deduce that the probability of state $H$ is higher than $\frac{1}{2}$ conditional on the number of $H$ players is larger then $B$ even when taking into consideration her own signal. In that case she can achieve a positive expected utility by deviating and taking action $1$, whereas action $0$ will give her a utility of $0$. This demonstrates, that players' equilibrium strategy is not determined merely by their signal as players must condition their utility on the event that the good is supplied.

The above intuition suggests that the following probabilities play a crucial role in the analysis of symmetric equilibria:
\begin{definition}\label{def:xnyn}
We define the following probabilities,
	\begin{itemize}
		\item
		$x_n=Pr_{\sigma}(\sum_{i=1}^n a_i\geq B|\omega=H,a_1=1)$ is the probability the good is supplied conditional on state $\omega=H$ and on $a_i=1$.
		\item
		$y_n=Pr_{\sigma}(\sum_{i=1}^n a_i\geq B_n|\omega=L,a_1=1)$ is the probability that the good is supplied conditional on state $\omega=L$ and $a_1=1$.
	\end{itemize}
\end{definition}
By Theorem \ref{thm:unique_eq}, players of high type play the pure action $a=1$ in equilibria. Therefore, our analysis on the properties of the crowdfunding game depends on the equilibrium strategy for low-type players. In the following lemma we describe their expected utility from buying the product using the two probabilities defined above.
\begin{lemma}\label{lem:expct_util}
	The expected payoff of a low type $i$ (i.e., a player with $s_i=L$)
	from playing $a_i=1$ is
	$$ E_{\sigma}(u_i(1,\sigma_{-i}(s_{-i}))|s_i=L) = (1-p) x_n-py_n.$$
\end{lemma}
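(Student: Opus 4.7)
The plan is a direct Bayesian calculation. Since $s_i=L$ and the prior on $\omega$ is uniform, Bayes' rule gives the posteriors
\[
\Pr(\omega=H\mid s_i=L)=1-p,\qquad \Pr(\omega=L\mid s_i=L)=p.
\]
Condition the expected utility on the state of nature and use the payoff definition \eqref{eq:consumer_util}: since $u_i(1,a_{-i},\omega)$ equals $+1$ on the event $\{\sum_j a_j\ge B,\omega=H\}$, equals $-1$ on $\{\sum_j a_j\ge B,\omega=L\}$, and vanishes otherwise, I would write
\[
E_{\sigma}\bigl(u_i(1,\sigma_{-i}(s_{-i}))\mid s_i=L\bigr)
=\Pr\!\bigl(\textstyle\sum_j a_j\ge B,\omega=H\mid s_i=L,a_i=1\bigr)
-\Pr\!\bigl(\textstyle\sum_j a_j\ge B,\omega=L\mid s_i=L,a_i=1\bigr),
\]
and then split each joint probability by conditioning further on $\omega$.

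The key step is a conditional independence observation: given $\omega$, the signals $s_{-i}$ are independent of $s_i$ (signals are IID conditional on the state), and since each other player's action is a function only of her signal, the event $\{\sum_j a_j\ge B\}$ is conditionally independent of $s_i$ given $\omega$ and $a_i=1$. Therefore
\[
\Pr\!\bigl(\textstyle\sum_j a_j\ge B\mid \omega,s_i=L,a_i=1\bigr)=\Pr\!\bigl(\textstyle\sum_j a_j\ge B\mid\omega,a_i=1\bigr).
\]
Because the equilibrium $\sigma$ is symmetric (Theorem \ref{thm:unique_eq}), conditioning on $a_i=1$ for any fixed $i$ yields the same value as conditioning on $a_1=1$, so these two conditional probabilities are precisely $x_n$ and $y_n$ from Definition \ref{def:xnyn}.

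Combining the two displays and substituting the posteriors gives
\[
E_{\sigma}\bigl(u_i(1,\sigma_{-i}(s_{-i}))\mid s_i=L\bigr)
=(1-p)\,x_n-p\,y_n,
\]
as required. The only subtle point is the conditional independence argument justifying the replacement of $\Pr(\sum_j a_j\ge B\mid\omega,s_i=L,a_i=1)$ by the quantities defining $x_n$ and $y_n$; once that is noted, the rest is bookkeeping with Bayes' rule.
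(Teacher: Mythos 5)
Your proof is correct and follows essentially the same route as the paper's: compute the posteriors $1-p$ and $p$ from the low signal, condition on the state, and identify the resulting supply probabilities with $x_n$ and $y_n$. You are merely more explicit than the paper about the conditional-independence step (given $\omega$, the other players' actions do not depend on $s_i$), which the paper leaves implicit.
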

\begin{proof}
Note that such a player assigns a probability of $1-p$ to the event that the state of nature is $\omega=H$  and a probability of $p$ to the event that $\omega=L$.
In the former case, if she plays $a_i=1$, then her utility is $1$ provided that the product is supplied, this holds with probability $x_n$. In the latter case, her utility from playing $a_i=1$ is $-1$, whenever the good is supplied, an event which probability is $y_n$.
\end{proof}

A low type plays a mixed action only when he is indifferent between the two pure actions and so a necessary condition for $\lambda>0$ (recall it is always the case that $\lambda<1$) is: $(1-p) x_n-py_n=0.$ This result becomes useful when analyzing the behavior of the players as the size of the population increases.
\section{Asymptotic Analysis}\label{sec:assympt_ana}
As mentioned in the introduction, firms turn to utilize the crowdfunding scheme for two distinct objectives. The first objective is to estimate the market demand for their product, and the second is to signal quality for future investors. The difference between these objectives is that in the former, it is in the firm's best interest to shelve the product if it is bad, while in the latter, the firm is less concerned with the true value of the product and more on the success of the campaign as the assumed risks are to be divided to future investors. To gain a complete view over the efficiency of the crowdfunding game we suggest two indexes. The first is the \emph{correction index} which refers to the efficiency in which this game filters projects and the second is the \emph{market penetration index} which refers to the expected number of players. We study the performance of the crowdfunding game as the size of the population increases.

\subsection{Asymptotic Correctness}\label{sec:asympt}

One important criterion to evaluate the success of a crowdfunding campaign is how well it aggregates information. In this spirit we introduce the {\em {correctness index}} for the crowdfunding game. The correctness index is the probability that the players receive the correct decision in hindsight. Formally,

\begin{definition}
	The \textit{correctness} parameter $\theta(B,n)$ of a game $\Gamma(B,n)$ is:
	\begin{equation}
	\theta(B,n)=\frac{1}{2}Pr(\sum_{i\in N} a_i \geq B|\omega=H)+\frac{1}{2}Pr(\sum_{i\in N} a_i < B|\omega=L)
	\end{equation}
	where $\sigma$ is the unique symmetric non-trivial equilibrium.
\end{definition}

Our second result characterizes the asymptotic correctness of the crowdfunding game,
\begin{theorem}\label{thm:asym_correctness}
	\begin{equation}\label{eq:max}
	\lim_{n\rightarrow\infty} \max_{B\in\{1\dots n\}} \theta(B,n) =\frac{3p-1}{2p} \ \ (<1).
	\end{equation}
\end{theorem}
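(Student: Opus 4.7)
\medskip

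\noindent\textit{Plan.}
Set $q_H(\lambda) := p+(1-p)\lambda$ and $q_L(\lambda) := (1-p)+p\lambda$, so that conditional on $\omega\in\{H,L\}$ the number of buyers is $\mathrm{Bin}(n,q_\omega)$. Writing $\alpha_\omega := \Pr_\sigma(\sum_i a_i \ge B\mid\omega)$, one has $\theta(B,n) = \tfrac12+\tfrac12(\alpha_H-\alpha_L)$, so it suffices to control $\alpha_H-\alpha_L$. I will prove matching upper and lower bounds of $(3p-1)/(2p)$.

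\medskip

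\noindent\textit{Upper bound.} Lemma \ref{lem:expct_util} yields $y_n \ge \tfrac{1-p}{p}\,x_n$ in every equilibrium: equality when $\lambda>0$ (indifference) and weak inequality when $\lambda=0$ (the low type must weakly prefer $a=0$). Hence $x_n-y_n \le \tfrac{2p-1}{p}\,x_n \le \tfrac{2p-1}{p}$. Conditioning on player~$1$'s action gives the identity
\[
\alpha_\omega \;=\; x^\omega_n - (1-q_\omega)\,\Pr\bigl(\mathrm{Bin}(n-1,q_\omega)=B-1\bigr),\qquad x^H_n=x_n,\ x^L_n=y_n,
\]
so $\alpha_H-\alpha_L \le (x_n-y_n) + \Pr(\mathrm{Bin}(n-1,q_L)=B-1)$. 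The local CLT bounds this last point mass by $O(1/\sqrt n)$ whenever $q_L$ is bounded away from $\{0,1\}$; the lower bound $q_L\ge 1-p$ is automatic, and the edge case $q_L\to 1$ (equivalently $\lambda\to 1$) forces both $\alpha_H,\alpha_L\to 1$ for $B/n$ bounded below $1$, so $\alpha_H-\alpha_L\to 0$ there. Combining, $\theta(B,n) \le (3p-1)/(2p)+o(1)$ uniformly in $B$.

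\medskip

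\noindent\textit{Lower bound.} Take $B_n:=\lceil n/2\rceil$ and let $\lambda^*:=(2p-1)/(2p)\in(0,1)$, the unique $\lambda$ with $q_L(\lambda)=1/2$; a short calculation gives $q_H(\lambda^*)=(3p-1)/(2p)>1/2$. I claim $\lambda_n:=\lambda(B_n,n)\to\lambda^*$. Along any convergent subsequence $\lambda_{n_k}\to\lambda'\ne\lambda^*$, either $q_L(\lambda')<1/2$, in which case the LLN forces $y_{n_k}\to 0$ while $x_{n_k}\to 1$ (since $q_H(\lambda')\ge p>1/2$), or $q_L(\lambda')>1/2$, in which case both $y_{n_k},x_{n_k}\to 1$; either contradicts the indifference ratio $y_n/x_n=(1-p)/p\in(0,1)$. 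Hence $\lambda_n\to\lambda^*$. A CLT refinement at the $n^{-1/2}$ scale (with the precise deviation of $q_L(\lambda_n)$ from $1/2$ pinned down by indifference) then yields $\alpha_L\to(1-p)/p$, while $q_H(\lambda_n)\to(3p-1)/(2p)>1/2$ gives $\alpha_H\to 1$ by the LLN. Therefore $\theta(B_n,n)\to\tfrac12+\tfrac12(1-(1-p)/p)=(3p-1)/(2p)$, matching the upper bound.

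\medskip

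\noindent\textit{Main obstacle.} The delicate step is upgrading the LLN-level statement $q_L(\lambda_n)\to 1/2$ to a CLT-scale statement forcing $\alpha_L$ to the specific value $(1-p)/p$. This requires verifying that the equilibrium $\lambda_n$ self-adjusts so that $\sqrt n\,(q_L(\lambda_n)-1/2)$ converges to exactly the constant determined by indifference; monotonicity of $y_n$ in $\lambda$ (implicit in the uniqueness proof of Theorem \ref{thm:unique_eq}) together with continuity of the binomial CDF in its success parameter should reduce this to bookkeeping, but it is the only point where the CLT must be invoked nontrivially. A secondary technical point is the $\lambda\to 1$ edge case in the upper bound, fortunately innocuous because it collapses $\alpha_H$ and $\alpha_L$ onto the same limit.
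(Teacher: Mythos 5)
Your proposal is correct and rests on the same two pillars as the paper's proof: the low type's (in)difference condition forces $y_n\ge\frac{1-p}{p}x_n$ (with equality when $\lambda>0$), which caps $\frac12 x_n+\frac12(1-y_n)$ at $\frac{3p-1}{2p}$, and the sequence $B_n=\lceil n/2\rceil$ attains this cap because $q_H(\lambda^*)>\frac12>$ nothing forces $q_L$ above $\frac12$. The execution differs in two useful ways. For the upper bound, your exact pivotality identity $\alpha_\omega=x_n^\omega-(1-q_\omega)\Pr(\mathrm{Bin}(n-1,q_\omega)=B-1)$ plus a local limit bound gives a bound that is uniform in $B$, whereas the paper extracts convergent subsequences with $B_n/n\to q$ and invokes its Lemma \ref{lem:diff_is_zero}; and your observation that $\lambda=0$ still yields the weak inequality $y_n\ge\frac{1-p}{p}x_n$ absorbs the $q\le 1-p$ regime into the same LP bound, where the paper argues that case separately (correctness degenerates to $\frac12$). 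One small tightening: rather than dropping the factor $(1-q_L)$ and then arguing the $\lambda\to 1$ edge case by hand (your argument there only covers $B/n$ bounded below $1$), keep it --- $(1-q_L)\Pr(\mathrm{Bin}(n-1,q_L)=B-1)\le C\sqrt{(1-q_L)/((n-1)q_L)}\le C/\sqrt{(n-1)(1-p)}$ --- and the bound is uniform with no case analysis.

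The one substantive criticism is that your ``main obstacle'' is not an obstacle. You do not need any CLT-scale analysis of $\sqrt{n}\,(q_L(\lambda_n)-\tfrac12)$ to get $\alpha_L\to\frac{1-p}{p}$: once $\lambda_n\to\lambda^*>0$ guarantees $\lambda_n>0$ for large $n$, the indifference condition holds as an exact identity $y_n=\frac{1-p}{p}x_n$ for each such $n$, and $x_n\to 1$ by the law of large numbers (since $q_H(\lambda_n)\to\frac{3p-1}{2p}>\frac12$), so $y_n\to\frac{1-p}{p}$ immediately; the passage from $y_n$ to $\alpha_L$ is then your own $O(1/\sqrt n)$ pivotality correction (or the paper's Lemma \ref{lem:diff_is_zero}). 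The equilibrium does self-adjust $q_L(\lambda_n)$ at the $n^{-1/2}$ scale around $\tfrac12$, but that is a consequence of the indifference identity, not a prerequisite for the limit, so the ``nontrivial CLT invocation'' can be deleted.
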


The correctness in the crowdfunding game is bounded away from one. This stands in contrast with Condorcet's jury theorem which argues that where large societies necessarily vote for the correct alternative.

The full proof of Theorem \ref{thm:asym_correctness} is relegated to Appendix \ref{sec:proofs} however we provide an outline for the proof. The proof requires the following two lemmas (which proofs are provided in Appendix \ref{sec:proofs} as well). The first lemma characterizes the asymptotic probability that a player of type $L$ chooses the action $1$:

\begin{lemma}\label{lem:asympt_probs}
	For a sequence $\{B_n\}_{n=1}^{\infty}$ satisfying $\lim_{n\rightarrow\infty} \frac{B_n}{n}=q$ for some $q\in[0,1]$:
	\begin{equation}
	\lim_{n\rightarrow\infty} \lambda(B_n,n)=\begin{cases}
	0&\mbox{ if } q\leq 1-p\\
	\frac{q-(1-p)}{p}&\mbox{ otherwise}
	\end{cases}
	\end{equation}
\end{lemma}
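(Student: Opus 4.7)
Let $\lambda_n := \lambda(B_n,n) \in [0,1]$ and $\ell^{\ast} := \max\{0,(q-(1-p))/p\}$. Since $[0,1]$ is compact, it suffices to show that every subsequential limit $\ell$ of $\{\lambda_n\}$ equals $\ell^{\ast}$. Introduce the conditional per-player buy-probabilities under the symmetric equilibrium profile, $\mu_H(\lambda) := p + (1-p)\lambda$ and $\mu_L(\lambda) := (1-p) + p\lambda$, so that $x_n = \Pr(\mathrm{Bin}(n-1,\mu_H(\lambda_n)) \geq B_n-1)$ and $y_n = \Pr(\mathrm{Bin}(n-1,\mu_L(\lambda_n)) \geq B_n-1)$. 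My two tools are: (i) the equilibrium condition of Lemma \ref{lem:expct_util}, namely that either $\lambda_n = 0$ with $(1-p)x_n \leq p y_n$, or $\lambda_n \in (0,1)$ with indifference $(1-p)x_n = p y_n$; and (ii) the fact, from the Law of Large Numbers for binomials, that $x_n \to 1$ or $x_n \to 0$ according as $\mu_H(\ell) > q$ or $\mu_H(\ell) < q$ along the subsequence, and similarly for $y_n$.

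\textbf{Upper bound $\ell \leq \ell^{\ast}$.} If instead $\ell > \ell^{\ast}$, then $\mu_L(\ell) > q$, and a fortiori $\mu_H(\ell) \geq \mu_L(\ell) > q$, so along the subsequence $x_{n_k},y_{n_k} \to 1$. Since $\ell > 0$, we have $\lambda_{n_k} > 0$ for $k$ large, so indifference is in force; passing to the limit in $(1-p) x_{n_k} = p y_{n_k}$ gives $1-p = p$, contradicting $p > 1/2$.

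\textbf{Lower bound $\ell \geq \ell^{\ast}$.} This is automatic when $q \leq 1-p$. For $q > 1-p$, assume $\ell < \ell^{\ast}$. I first rule out $\lambda_{n_k} = 0$ along any sub-subsequence: under $\lambda_n = 0$ we have $\mu_L(0) = 1-p < q$, hence $y_n \to 0$; a short case analysis on the sign of $p - q$ (handling $p \geq q$ directly, since then $x_n \to 1$ or $1/2$, and invoking a Chernoff/KL-divergence comparison when $p < q$, using that $p$ is closer to $q$ than $1-p$ so $D(q \Vert p) < D(q \Vert 1-p)$) shows $(1-p)x_n - p y_n > 0$ for large $n$, contradicting the $\lambda_n = 0$ optimality condition. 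Hence $\lambda_{n_k} \in (0,1)$ and indifference holds. But $\mu_L(\ell) < q$ gives $y_{n_k} \to 0$, so indifference forces $x_{n_k} \to 0$ as well, which in turn forces $\mu_H(\ell) < q$ (the knife-edge $\mu_H(\ell) = q$ is incompatible with $x_n \to 0$). Applying the same KL-divergence monotonicity to the pair $\mu_L(\ell) < \mu_H(\ell) < q$ yields $x_{n_k}/y_{n_k} \to \infty$, incompatible with the fixed finite ratio $p/(1-p)$ demanded by indifference.

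\textbf{Main obstacle.} The crux is the large-deviations comparison of $x_n$ and $y_n$ when both binomial upper tails vanish exponentially; I would invoke the standard Chernoff bound for Bernoulli sums together with the strict convexity of $\mu \mapsto D(q \Vert \mu)$ and its unique zero at $\mu = q$, which gives strict inequality of Chernoff rates whenever $\mu_H(\ell) \neq \mu_L(\ell)$ and both lie on the same side of $q$. With both bounds in place every accumulation point equals $\ell^{\ast}$, so by compactness $\lambda_n \to \ell^{\ast}$, which is precisely the piecewise formula stated in the lemma.
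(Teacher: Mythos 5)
Your proof is correct and follows the same skeleton as the paper's: pass to an accumulation point $\ell$ of $\{\lambda_n\}$, introduce the conditional buy-probabilities $\mu_H,\mu_L$, use the Law of Large Numbers to pin down the limits of $x_n$ and $y_n$, and derive a contradiction with the low type's optimality/indifference condition unless $\ell=\ell^{\ast}$. The genuine added value of your write-up is that you isolate the step the paper glosses over: in the regime $\ell<\ell^{\ast}$ (the paper's Case~2.2) both tails $x_n$ and $y_n$ can vanish --- the paper's assertion there that $\lim x_n=1$ is false whenever $\mu_H(\ell)<q$ (take $p=0.6$, $q=0.9$, $\lambda$ near $0$) --- and the contradiction really rests on the unjustified claim $y_n/x_n\to 0$. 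Your Chernoff/KL-divergence comparison, using that $\mu\mapsto D(q\Vert\mu)$ is strictly decreasing for $\mu<q$ and that $\mu_L(\ell)<\mu_H(\ell)$, supplies exactly the missing justification, so in this one respect your argument is tighter than the published one.

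Two micro-claims should nevertheless be repaired, though neither threatens the argument. When ruling out $\lambda_{n_k}=0$ you assert that $p\ge q$ gives $x_n\to 1$ or $\tfrac12$; since only $B_n/n\to q$ is assumed, in the boundary case $p=q$ the threshold may exceed the mean by an amount growing faster than $\sqrt n$ but slower than $n$, and then $x_n\to 0$. Likewise, the knife-edge $\mu_H(\ell)=q$ is \emph{not} incompatible with $x_{n_k}\to 0$. Both cases are disposed of by the observation you already invoke: when the success probability converges to $q$ and $B_n/n\to q$, the upper tail $x_n$ decays at most subexponentially (rate $e^{-o(n)}$, by the lower Chernoff/Stirling bound), whereas $y_n$ decays exponentially because $\mu_L(\ell)<q$ strictly; hence $y_n/x_n\to 0$ all the same, which is all either step requires. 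With those two sentences amended, the proof is complete.
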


Lemma \ref{lem:diff_is_zero} is a corollary of Lemma \ref{lem:asympt_probs},
\begin{lemma}\label{lem:diff_is_zero}
	For a sequence $\{B_n\}_{n=1}^{\infty}$ satisfying $\lim_{n\rightarrow\infty} \frac{B_n}{n}=q$ for some $q\in[0,1]$:
	$$\lim_{n\rightarrow\infty}x_n-Pr(\sum_{i\in N} a_i \geq B_n|\omega=H)=0$$
	and
	$$\lim_{n\rightarrow\infty}y_n- Pr(\sum_{i\in N} a_i \geq B_n|\omega=L)=0.$$
\end{lemma}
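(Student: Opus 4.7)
The plan is to reduce both identities to tail probabilities of binomial distributions, compute the difference via an elementary identity, and then control the resulting expression using Lemma \ref{lem:asympt_probs}.

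First, fix $n$ and let $\sigma$ be the unique symmetric non-trivial equilibrium from Theorem \ref{thm:unique_eq}. Conditional on $\omega$, signals---hence actions---are independent across players, and each $a_i$ is Bernoulli with parameter
$$r_{H,n} = p + (1-p)\lambda_n, \qquad r_{L,n} = (1-p) + p\lambda_n,$$
where $\lambda_n = \lambda(B_n, n)$. Consequently, $Pr(\sum_i a_i \geq B_n \mid \omega) = Pr(\mathrm{Bin}(n, r_{\omega,n}) \geq B_n)$, and since conditioning on $a_1 = 1$ does not alter the distribution of $a_2,\dots,a_n$ under $\omega$,
$$x_n = Pr(\mathrm{Bin}(n-1, r_{H,n}) \geq B_n - 1), \qquad y_n = Pr(\mathrm{Bin}(n-1, r_{L,n}) \geq B_n - 1).$$

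Second, I would prove the following elementary identity for any $r \in (0,1)$: writing $\mathrm{Bin}(n,r) = \mathrm{Bin}(n-1,r) + X$ with $X \sim \mathrm{Ber}(r)$ independent and computing both tails, one obtains in a single line
$$Pr(\mathrm{Bin}(n-1, r) \geq B_n - 1) - Pr(\mathrm{Bin}(n, r) \geq B_n) = (1-r)\, Pr(\mathrm{Bin}(n-1, r) = B_n - 1).$$
Applied at $r = r_{H,n}$ and $r = r_{L,n}$, this reduces the lemma to showing that $(1-r_{\omega,n})\, Pr(\mathrm{Bin}(n-1, r_{\omega,n}) = B_n - 1) \to 0$ for $\omega \in \{H,L\}$.

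Third, I invoke Lemma \ref{lem:asympt_probs}, which pins down $\lambda^* := \lim_n \lambda_n = \max\{0,(q-(1-p))/p\} \in [0,1]$, and split into two regimes. If $q < 1$ then $\lambda^* < 1$, hence $r_{H,n}$ and $r_{L,n}$ eventually lie in a fixed compact subinterval of $(0,1)$; Stirling's approximation then bounds the binomial point mass uniformly by $O(1/\sqrt{n})$, and multiplying by the bounded factor $(1-r_{\omega,n})$ yields a bound of order $1/\sqrt{n}$. If $q = 1$ then $\lambda^* = 1$, and using $(1-r_{H,n}) = (1-p)(1-\lambda_n)$ and $(1-r_{L,n}) = p(1-\lambda_n)$, the prefactor itself tends to $0$; combined with the trivial bound $Pr(\mathrm{Bin}(n-1, r) = B_n - 1) \leq 1$, the conclusion follows.

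The main subtlety is the boundary case $q = 1$: the local limit / Stirling bound used in the generic regime loses its force when $r_{\omega,n} \to 1$, so one must fall back on the direct observation that the factor $(1-r_{\omega,n})$ collapses at rate $1 - \lambda_n$. Aside from this boundary handling, the argument is a mechanical combination of conditional independence, a binomial identity, and standard concentration.
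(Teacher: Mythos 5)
Your proof is correct, but it takes a genuinely different route from the paper's. The paper argues softly: for $\lambda^*<1$ it rewrites $x_n$ as $Pr(\tfrac{1}{n}\sum_{j<n}a_j+\tfrac1n\ge \tfrac{B_n}{n}\mid\omega)$ and asserts via a law-of-large-numbers consideration that shifting the empirical mean by $1/n$ does not change the limit, and for $\lambda^*=1$ it decomposes the unconditional probability as a mixture over $a_1$ and uses $\lambda_n\to1$. You instead compute the difference \emph{exactly}: writing $\mathrm{Bin}(n,r)=\mathrm{Bin}(n-1,r)+\mathrm{Ber}(r)$ gives $x_n-Pr(\sum_i a_i\ge B_n\mid\omega=H)=(1-r_{H,n})\,Pr(\mathrm{Bin}(n-1,r_{H,n})=B_n-1)$ (this identity checks out), and then you kill the right-hand side either by the $O(1/\sqrt{n})$ bound on the binomial mode when $r_{\omega,n}$ stays in a compact subinterval of $(0,1)$, or by the vanishing prefactor $1-r_{\omega,n}=O(1-\lambda_n)$ when $q=1$. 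What your approach buys is robustness at the critical scaling $B_n/n\to\lambda_\omega(\lambda^*)$, where both tail probabilities hover near $1/2$ and the paper's ``add $1/n$ to the mean'' argument is at its most delicate; your identity shows the gap is a single point mass and hence $O(1/\sqrt{n})$ regardless of where the threshold sits relative to the mean, and it also gives an explicit rate. The appeal to Lemma \ref{lem:asympt_probs} to pin down $\lambda^*$ introduces no circularity, since that lemma's proof does not use the present one. The only cosmetic remark is that invoking Stirling is heavier machinery than needed --- the standard bound $\max_k Pr(\mathrm{Bin}(m,r)=k)\le \bigl(2mr(1-r)\bigr)^{-1/2}$ suffices --- but that does not affect correctness.
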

By Lemma \ref{lem:diff_is_zero} we can conclude that asymptotically, player $i$ is conditionally non-pivotal. In other words, from the perspective of player $i$, if the population is very big, she consider her own action to have no role in determining the probability of supply, conditional on knowing the state of nature.
We are now ready to outline the proof of Theorem \ref{thm:asym_correctness}.

\subsubsection*{Proof Outline of Theorem \ref{thm:asym_correctness}}
Consider the sequence of games $\Gamma(\frac{n}{2},n)$.
By Lemma \ref{lem:asympt_probs}
\begin{equation}\label{eq:assym_lambda}
\lambda(\frac{n}{2},n) = \frac{2p-1}{2p}.
\end{equation}
By equation \eqref{eq:assym_lambda}, for  sufficiently large $n$, $\lambda(\frac{n}{2},n)>0$, and thus by the indifference condition for player of type $l$,
$(1-p)x_n-py_n=0$.

Further more, by equation \eqref{eq:assym_lambda}, $\lambda_H(\lambda(\frac{n}{2},n))$ converges to $\frac{3p-1}{2p}>\frac{1}{2}.$ This implies that $\lim_{n\rightarrow\infty} x_n =1$, and thus $\lim y_n = \frac{1-p}{p}.$  Together with Lemma \ref{lem:diff_is_zero} this proves that
$$\lim_{n\rightarrow\infty}\Theta(\frac{n}{2},n)=\frac{1}{2}+\frac{1}{2}(1-\frac{1-p}{p})=\frac{3p-1}{2p}.$$
And by this we conclude that,
$$\lim_{n\rightarrow\infty} \max_{B\in\{1\dots n+1\}} \theta(B,n) \ge \frac{3p-1}{2p}.$$

To show the opposite inequality consider an arbitrary sequence $\{B_n\}$ for which the sequences
$\{\theta(B_n,n)\}, \{x_n\}_n$ and $\{y_n\}_n$ converge to $\theta^*, x^*$ and $y^*$ respectively (otherwise, consider a sub-sequence). By the definition of the correctness index and by Lemma \ref{lem:asympt_probs} we get,
\begin{equation}\label{eq:correct_ll1}
\lim_n \theta(B_n,n) = \frac{1}{2}x^* + \frac{1}{2}(1-y^*).
\end{equation}

Furthermore assume that $q>1-p$ and thus for sufficiently large $n$,  $\lambda(B_n,n)>0$. This entails that
$(1-p)x_n-py_n=0$ and consequently
\begin{equation}\label{eq:correct_ll2}
(1-p)x^*-py^*=0.
\end{equation}

Therefore, by equation \eqref{eq:correct_ll1} and equation \eqref{eq:correct_ll2},  the asymptotic correctness value is bounded above by the solution for the following linear program:
\begin{equation}
\begin{aligned}
& {\text{max}}
& & \frac{1}{2}x^*+\frac{1}{2}(1-y^*) \\
& \text{s.t.} & & 1\geq x^*,y^*\geq 0 \\
& & &  (1-p)x^*-py^*=0, \\
\end{aligned}
\end{equation}
which is $\frac{3p-1}{2p}$.
\qed

\subsection{Market Penetration}

An alternative motivation for running a crowdfunding campaign is to use it to convince some third party (e.g., institutional investors) about the validity of the product.  To capture this we introduce the {\em market penetration index} which is the expected share of the population who eventually buys the product (in case the product is not supplied the number of buyers is zero). Formally,
\begin{definition}
	The {\em market penetration index} of the game $\Gamma(B,n)$ is
	\begin{equation}
	R(B,n)=E_{\sigma}\big[\frac{(\sum_{k=1}^{n} a_i)}{n} \chi(\sum_{i\in N}a_i \geq B) \big] =
	Pr_{\sigma}(\sum_{i\in N}a_i \geq B)E_{\sigma}\big[\frac{\sum_{i\in N}a_i}{n}|\sum_{i\in N}a_i \geq B \big],
	\end{equation}
	where $\chi(A)$ is the indicator function of the event $A$.
\end{definition}

The market penetration index can also be motivated is in terms of revenue. To see this consider an alternative but equivalent model. In this model the good is valued either at $2$ (in state $H$) or  $0$ (in state $L$). Assume there are no marginal production costs and players, with linear utilities, can join the campaign by pledging $1$ unit of currency. In this equivalent setting $R(B,n)$ is merely the firm's expected per-capita profit. The following theorem characterizes the asymptotic maximal market penetration.

\begin{theorem}\label{thm:revenue}
	\begin{equation}\label{eq:revenue}
	\lim_{n\rightarrow\infty} \max_{B\in\{1\dots n\}} R(B,n) =\frac{1}{2p}
	\end{equation}
\end{theorem}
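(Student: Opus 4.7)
The plan is to first reduce $R(B,n)$ to a clean asymptotic expression and then establish matching upper and lower bounds. Let $\lambda_H = p+(1-p)\lambda$ and $\lambda_L = (1-p)+p\lambda$ denote the single-player probability of buying conditional on the state, where $\lambda=\lambda(B,n)$ is the equilibrium mixing probability from Theorem~\ref{thm:unique_eq}. Conditional on state $\omega$, the actions $a_i$ are i.i.d.\ Bernoulli$(\lambda_\omega)$, so the empirical average $\bar a_n=\tfrac{1}{n}\sum_i a_i$ has variance at most $1/(4n)$ around $\lambda_\omega$. Writing $\bar a_n=\lambda_\omega+(\bar a_n-\lambda_\omega)$ inside the restricted expectation and bounding the error via Cauchy--Schwarz yields the key decomposition
\begin{equation*}
R(B,n) \;=\; \tfrac{1}{2}\bigl[\lambda_H P_H + \lambda_L P_L\bigr] \;+\; O(1/\sqrt{n}), \qquad \text{where } P_\omega := \Pr\bigl(\sum_i a_i \ge B \mid \omega\bigr).
\end{equation*}

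For the upper bound, take an arbitrary sequence $\{B_n\}$ and, by passing to a subsequence, assume $B_n/n\to q$, $\lambda(B_n,n)\to\lambda^*$, and both $P_H, P_L$ converge. If $\lambda(B_n,n)>0$ eventually, the indifference condition following Lemma~\ref{lem:expct_util} gives $(1-p)x_n=py_n$, which combined with Lemma~\ref{lem:diff_is_zero} yields $P_L=\tfrac{1-p}{p}P_H+o(1)$. Substituting into the decomposition and factoring produces
\begin{equation*}
2R \;=\; \tfrac{P_H}{p}\bigl[\,p\lambda_H + (1-p)\lambda_L\,\bigr] + o(1),
\end{equation*}
and expanding gives $p\lambda_H+(1-p)\lambda_L=p^2+(1-p)^2+2p(1-p)\lambda$, which is linear in $\lambda\in[0,1]$ and attains its maximum $1$ at $\lambda=1$. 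Together with $P_H\le 1$ this forces $R\le \tfrac{1}{2p}+o(1)$. If instead $\lambda(B_n,n)=0$ eventually, Lemma~\ref{lem:asympt_probs} forces $q\le 1-p$, so $\lambda_H=p$ and $\lambda_L=1-p$; the trivial bound $P_\omega\le 1$ then gives $R\le \tfrac{1}{2}(p+(1-p))=\tfrac{1}{2}<\tfrac{1}{2p}$.

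For the lower bound, fix $\epsilon\in(0,p)$ and set $B_n^\epsilon=\lceil(1-\epsilon)n\rceil$, so $B_n^\epsilon/n\to 1-\epsilon>1-p$. Lemma~\ref{lem:asympt_probs} gives $\lambda(B_n^\epsilon,n)\to(p-\epsilon)/p$, hence $\lambda_L\to 1-\epsilon$ and $\lambda_H\to 1-(1-p)\epsilon/p$. Since $p>1/2$ gives the strict inequality $\lambda_H>1-\epsilon$, the law of large numbers yields $P_H\to 1$, and the indifference condition combined with Lemma~\ref{lem:diff_is_zero} then forces $P_L\to(1-p)/p$. Plugging into the decomposition,
\begin{equation*}
\lim_n R(B_n^\epsilon,n) \;=\; \tfrac12\Bigl[\lambda_H + \tfrac{1-p}{p}\lambda_L\Bigr] \;=\; \frac{1-2\epsilon(1-p)}{2p},
\end{equation*}
and letting $\epsilon\downarrow 0$ via a diagonal subsequence gives $\liminf_n\max_B R(B,n)\ge \tfrac{1}{2p}$.

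The main obstacle is the borderline behavior in state $L$ along the optimizing sequence: the threshold is chosen precisely so that $\lambda_L\to q$, meaning neither the LLN nor a direct CLT argument pins down $P_L$. The initial decomposition is designed to sidestep exactly this difficulty, as it requires nothing about the conditional law of $\bar a_n$ given success, and the value $P_L\to(1-p)/p$ is determined indirectly via the equilibrium indifference condition rather than by a direct probabilistic calculation.
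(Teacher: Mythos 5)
Your proof is correct, and it shares the paper's skeleton: the same two lemmas (Lemmas \ref{lem:asympt_probs} and \ref{lem:diff_is_zero}), the same indifference condition $(1-p)x_n=py_n$, thresholds with $B_n/n\to q$ and $q\uparrow 1$ for the lower bound, and a subsequence-plus-optimization argument for the upper bound. Where you genuinely depart is in how $R(B,n)$ is tied to the success probabilities. The paper never evaluates the conditional mean of $\tfrac1n\sum_i a_i$ given success; it sandwiches it between $B_n/n$ and $1$, obtaining $R(qn,n)\ge \tfrac{q}{2p}+o(1)$ below and $R\le\tfrac12(x_n+y_n)$ above, and then solves the linear program in $(x,y)$ subject to $(1-p)x=py$. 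Your Cauchy--Schwarz decomposition $R=\tfrac12[\lambda_H P_H+\lambda_L P_L]+O(1/\sqrt n)$ replaces both crude bounds with an exact first-moment identity. This buys two things: an exact limit $\tfrac{1-2\epsilon(1-p)}{2p}$ for each fixed threshold fraction (the paper's displayed equality $\lim_n R(qn,n)=\tfrac1{2p}$ for every $q$ is an overstatement --- its own Table~\ref{tab:tab1} finite-$n$ values for $p=.75$, $B=n/2$ approach $.5$, matching your formula, not $.667$ --- though the paper's proof only uses the correct inequality $R\ge \tfrac{q}{2p}$); and an upper bound that reduces to maximizing $p\lambda_H+(1-p)\lambda_L=p^2+(1-p)^2+2p(1-p)\lambda$ over $\lambda\in[0,1]$, which transparently locates the optimum at $\lambda\to 1$, i.e.\ $q\to 1$. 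Your closing remark is also accurate: on the optimizing sequence $\lambda_L\to q$ exactly, so $P_L$ is a borderline event for the law of large numbers, and both your argument and the paper's determine it only through the indifference condition. One cosmetic gap in the upper bound: ``$\lambda_n>0$ eventually'' and ``$\lambda_n=0$ eventually'' are not exhaustive; pass to a further subsequence on which one of the two holds (harmless, since you have already extracted convergent subsequences).
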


A counter-intuitive conclusion is that the penetration level decreases as the signal, $p$, becomes more accurate.
To understand the mechanism that leads up to this result note that as  the signal becomes less informative, the players with the low-type will be less confidant in their personal signal and more inclined to use the aforementioned social insurance. This can be seen in equation \eqref{eq:assym_lambda}. For a sufficently large $n$,  the probability that low-type player assigns to buying increases in  $p$ increases. As $\lambda$ increases and thus so does the expected number of low-type player who choose `buy'. Furthermore, this leads to an increase in  the probability a bad product will be supplied increases and thus again, increasing $R(B,n)$.

Lemmas \ref{lem:asympt_probs} and \ref{lem:diff_is_zero} are, once more, instrumental for the proof of Theorem \ref{thm:revenue} as we explain below. Once again, the full proof is relegated to Appendix \ref{sec:proofs}.
\subsubsection*{Proof Outline of Theorem \ref{thm:revenue}}
We consider a sequence games $\Gamma(q n,n)$ for an arbitrary $q\in((1-p),1).$
As before, assume that the sequences $\lambda(q n,n),  x_n$ and $y_n$ converge (if not, consider a sub-sequence).

By Lemma \ref{lem:asympt_probs} we deduce that $\lim_{n\rightarrow\infty} \lambda(q n,n) = \frac{q-(1-p)}{p}.$
Therefore $\lambda_H(\lambda(q n,n))$ converges to $\frac{q(1-p)+2p-1}{p}>q$, where the last inequality follows from the choice of $q>1-p$. In words, conditional on the state being $H$, the limit probability that each player takes the action $1$ is greater than $q$. Thus, by the Law of Large Numbers, the limit probability of the event in which the number of players is greater then the threshold $qn$ approaches $1$.  By applying Lemma \ref{lem:diff_is_zero} we conclude that $\lim_{n\rightarrow\infty} x_n =1$. Thus, for a sufficiently large $n$, $\lambda(q n,n)>0$.  By the indifference criterion of a low type players, we have that
$(1-p)x_n-py_n=0$ and therefore, $\lim_{n\rightarrow\infty} y_n = \frac{1-p}{p}.$

Applying Lemma \ref{lem:diff_is_zero} again we conclude that the limit probability to pass the threshold in the state $L$ is $\frac{1-p}{p}.$ We  compute the limit market penetration index by calculating its expected value over the two states:
$$\lim_{n\rightarrow\infty}R(qn,n)\geq \frac{1}{2}q+\frac{1}{2}\frac{1-p}{p}q=\frac{q}{2p}.$$
By taking $q$ to one we conclude that  $lim_{n\rightarrow\infty} \max_{B\in\{1\dots n\}} R(B,n) \ge \frac{1}{2p}$

To verify the opposite inequality  consider a sequence of games $\Gamma(B_n,n)$ for which the maximum of equation \eqref{eq:revenue} is attained. Assume that the sequences $\{\lambda_n\}, \{x_n\},$ and $\{y_n\}$ converge to $\lambda^*,x^*$ and $y^*$ respectively.  As the expected penetration index cannot exceed $1$, the definition of the market penetration index yields
$R(B_n,n) \le \frac{1}{2}x_n+\frac{1}{2}y_n$. In the limit we get
$$\lim_{n\rightarrow\infty} \max_{B\in\{1\dots n\}} R(B,n)\leq \frac{1}{2}x^*+\frac{1}{2}y^*.$$

By the indifference criterion for the low-type player, which holds for large enough $n$ we get that $(1-p)x_n-py_n=0$. This must also hold in the limit  and thus $(1-p)x^*-py^*=0$. Therefore the maximal penetration is bounded from above by the solution of the following linear program:
\begin{equation}
\begin{aligned}
& {\text{max}}
& & \frac{1}{2}x^*+\frac{1}{2}y^* \\
& \text{s.t.} & & 1\geq x^*,y^*\geq 0 \\
& & &  (1-p)x^*-py^*=0, \\
\end{aligned}
\end{equation}
which is $\frac{1}{2p}$.
\qed

\section{Finite number of players}\label{sec:finite}
In Section \ref{sec:assympt_ana} we examine the properties of the crowdfunding game in a setting where the number of players is very large.  Using our proposed correctness index we show that there is a positive probability that the campaign's results will not be optimal, and using the market penetration index we show that firms can utilize this game to attract low type players thus preventing the information aggregation.

When looking at figures from crowdfunding platforms one can see that the number of players is usually rather small. For example, the average number of pledgers on Kickstarter is $108$, on other platforms, this figure is even lower (see \url{https://www.entrepreneur.com/article/269663}).  To bridge this gap, and examine the applicability of our results for small samples,  we calculate the equilibrium strategy $(\lambda)$, the correctness index $\left(\theta(B,n)\right)$  and the market penetration index $\left(\frac{R(B_n,n)}{n}\right)$ for various values of the game parameters. The results of these calculations are found in Appendix \ref{sec:finite_tab}, Table \ref{tab:tab1}.

The most interesting observation is that our asymptotic analysis provides a good approximation for crowdfunding games with interim values of $n$. In some case even as values as low as $n=10$. This is quite robust to the game parameters $B$ and $p$ and is true for all three parameters we calculate: $\lambda, \Theta$ and $R$.

We additionally observe that when the signal is weak ($p=0.55$) and the threshold is low ($B=\frac{n}{3}$), the low-type players always opt-out. Nevertheless, the number of high-type players is sufficient to induce production even if $\omega=L$ and thus the correctness of the game deteriorates rather rapidly. As $B$ increases the risk facing low-type players decreases and therefore we can see that throughout the table, higher $B$ leads to a higher $\lambda$. We can also see that the asymptotic results provide a very good approximation  even in cases with a very small population such as $n\in\{5,10\}$. One exception is the case where $n=5$ and $B=0.9n$. In these cases the product will only be supplied if all players choose to commit. Note that even then, we still get a good approximation when $n=100$ or higher.

\section{Concluding remarks}\label{sec:conc}
Crowdfunding is often used by many entrepreneurs to validate the market demand for innovative products or an art project. We study how well do crowdfunding campaigns perform in this context. To do so we introduce a vary simple game of incomplete information we call the crowdfunding game. For a crowdfunding game we consider two success measures. First, the `correctness' of a campaign which captures how well information is aggregated and second, the market penetration index that reflects how convincing the campaign is. We show that for large populations information is not fully aggregated, which stands in contrast with Condorcet's jury theorem and we also provide clear bounds on the correctness and penetration index.

Our results are primarily asymptotic. However, calculations show that these asymptotic bounds provide good approximations for realistic values of populations size, sometimes as small $10$ players. In fact, even when the number of players is finite, all three parameters we measure: $\lambda,$ the correctness index and the market penetration index, are quite close to their respective asymptotic values. This observation is robust to the game parameters $B$ and $p$.

Another aspect that is ignored is the possibility to offer differentiated products at varying prices, which is quite typical in crowdfunding campaigns. Finally, actual campaigns are naturally dynamic and so it is natural to think of a scenario where buyers can wait for others, possibly more informed buyers, to make their move. These  scenarios are missing from our static model. We wish to study the interplay between social insurance and information aggregation. For this purpose, this model is sophisticated enough to yield interesting insights on the crowdfunding phenomenon which result from the strategic interaction among buyers.

% Bibliography
\bibliographystyle{plain}
\bibliography{crowdfunding}
\pagebreak
% Appendix
\appendix
\section{Table of Finite Sample Calculations.}
\label{sec:finite_tab}
\begin{table}[!htbp]
	\centering
	%	\resizebox{\textwidth}{!}{%
	\begin{tabular}{|c|c|c|c|c|c|c|c|c|c|c|c|c|c|}
		\hline
		$p$&$n$&\multicolumn{4}{|c|}{$B=\lceil\frac{n}{3}\rceil$}&\multicolumn{4}{|c|}{$B=\lceil\frac{n}{2}\rceil$}&\multicolumn{4}{|c|}{$B=\lceil\frac{9n}{10}\rceil$}\\
		\hline
		&&$\lambda$&\small $\theta(B,n)$&$\frac{x_n+y_n}{2}$&$\frac{R(B,n)}{n}$&$\lambda$&$\theta(B,n)$&$\frac{x_n+y_n}{2}$&$\frac{R(B,n)}{n}$&$\lambda$&$\theta(B,n)$&$\frac{x_n+y_n}{2}$&$\frac{R(B,n)}{n}$\\
		\hline
		\multirow{6}{*}{$.55$}&$5$&$0$&$.562$&$.806$&$.468$&$.029$&$.590$&$.526$&$.366$&$.599$&$.541$&$.329$&$.329$\\
		&$10$&$0$&$.582$&$.816$&$.453$&$.064$&$.605$&$.690$&$.427$&$.672$&$.558$&$.495$&$.462$\\
		&$100$&$0$&$.505$&$.995$&$.498$&$.154$&$.603$&$.891$&$.526$&$.837$&$.584$&$.801$&$.745$\\
		&$1000$&$0$&$.5$&$1$&$.5$&$.115$&$.595$&$.904$&$.511$&$.832$&$.592$&$.905$&$.832$\\
		&$\infty$&$0$&$.5$&$1$&$.5$&$.091$&$.591$&$.909$&$.909$&$.818$&$.591$&$.909$&$.909$\\
		\hline
		
		\multirow{6}{*}{$.75$}&$5$&$0$&$.809$&$.676$&$.459$&$.044$&$.883$&$.525$&$.408$&$.571$&$.712$&$.355$&$.355$\\
		&$10$&$0$&$.886$&$.610$&$.424$&$.101$&$.895$&$.593$&$.439$&$.651$&$.783$&$.501$&$.474$\\
		&$100$&$.078$&$.859$&$.641$&$.436$&$.291$&$.852$&$.648$&$.489$&$.838$&$.839$&$.659$&$.625$\\
		&$1000$&$.102$&$.841$&$.659$&$.442$&$.323$&$.839$&$.661$&$.497$&$.860$&$.835$&$.665$&$.632$\\
		&$\infty$&$.111$&$.833$&$.667$&$.667$&$.333$&$.833$&$.667$&$.667$&$.867$&$.833$&$.667$&$.667$\\
		\hline
		\multirow{6}{*}{$.95$}&$5$&$0$&$.989$&$.511$&$.480$&$.053$&$.995$&$.504$&$.479$&$.437$&$.923$&$.444$&$.444$\\
		&$10$&$.052$&$.994$&$.506$&$.479$&$.128$&$.991$&$.509$&$.483$&$.551$&$.974$&$.506$&$.496$\\
		&$100$&$.217$&$.981$&$.519$&$.487$&$.378$&$.979$&$.521$&$.495$&$.827$&$.976$&$.524$&$.518$\\
		&$1000$&$.273$&$.976$&$.524$&$.490$&$.446$&$.975$&$.525$&$.499$&$.877$&$.974$&$.526$&$.520$\\
		&$\infty$&$.298$&$.974$&$.526$&$.526$&$.474$&$.974$&$.526$&$.526$&$.895$&$.974$&$.526$&$.526$\\
		\hline
		
	\end{tabular}
	%}
	\caption{Calculation results for finite players sets}\label{tab:tab1}
	\raggedright{
		A crowdfunding game is defined by three parameters: (1) The number of players ($n$), (2) The threshold $B$, which in this analysis be described as a fraction of $n$, and (3) the quality of the player signals $p$. The results of the model are: (1) the probability that a low-type player will buy in equilibrium ($\lambda$), (2) The correctness of the game and (3) the expected number of contributions, provided that the product is supplied (MPI). In the following table we present the results for: different signal qualities $p\in\{0.55,0.75.0.95\}$, various thresholds $B\in\{\frac{n}{3},\frac{n}{2},0.9n\}$ and an increasing finite number of players $n=\{5,10,100,1000\}$, where the rows in which $n=\infty$ hold the value of the results at limit probabilities calculated using Theorems \ref{thm:unique_eq}, \ref{thm:asym_correctness} and \ref{thm:revenue}.
	}
\end{table}
\FloatBarrier
\section{Proofs}\label{sec:proofs}

Assume $\sigma$ is strategy tuple  of $\Gamma(B,n)$. Let $\sigma_i(H)=\psi$ and $\sigma_i(L)=\lambda$ be the probabilities that a high type player and the low type player choose action $a_i=1$ respectively.
Conditional on the state of nature $\omega\in\{H,L\}$ the random variable $\sum_{i=1}^n a_i$ has binomial distribution $Bin(n,\lambda_{\omega})$, where $\lambda_H=p\psi+(1-p)\lambda$ and $\lambda_L=(1-p)\psi+p\lambda$.

Let $x=Pr_{\sigma}(\sum_{j\neq i} a_i \geq B-1|\omega=H,a_i=1)$ be the  probability that the product be supplied conditional on state $\omega=H$ and $a_i=1$. Similarly, let $y=Pr_{\sigma}(\sum_{j\neq i} a_i \geq B-1|\omega=L,a_i=1)$ be the that the product be supplied probability conditional on $\omega=L$ and $a_i=1$.

The expected payoff to player of type $L$ conditional on her taking the action $1$ can be derived by conditioning on the state of nature. Following the definition of $x$ and $y$ this payoff must equal $(1-p)x-py$. Similarly, the payoff to player of type $H$ conditional on taking the action $1$ is $px-(1-p)y$.

\subsection{Proof of Theorem \ref{thm:unique_eq}}

In the following lemma we characterize the properties of  a symmetric non-trivial Bayes-Nash equilibria in all $\Gamma(B,n)$,

\begin{lemma}\label{lem:H}
	If $\sigma$ is a symmetric non-trivial Bayes-Nash equilibrium of $\Gamma(B,n)$ then $x>y$ and $\psi=1 > \lambda \ge 0$.
\end{lemma}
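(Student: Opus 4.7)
My plan is to exploit the observation that, conditional on state $\omega$, the sum $\sum_{j\neq i}a_j$ is distributed $\mathrm{Bin}(n-1,\lambda_\omega)$, so $x$ and $y$ are the upper tails of these binomials at level $B-1$, hence strictly monotone in $\lambda_H$ and $\lambda_L$ whenever $B\geq 2$. Because $\lambda_H-\lambda_L=(2p-1)(\psi-\lambda)$ with $p>\tfrac{1}{2}$, comparing $x$ with $y$ reduces to comparing $\psi$ with $\lambda$. The best-response analysis then rests on the two expected payoffs already computed in the setup: $px-(1-p)y$ for a high type and $(1-p)x-py$ for a low type.

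First I would establish $\psi\geq\lambda$ by contradiction. If $\psi<\lambda$, then $\lambda_L>\lambda_H$ gives $y>x$, so the low-type payoff is at most $(1-p)y-py=(1-2p)y<0$ (using $p>\tfrac{1}{2}$ together with $\lambda>0\Rightarrow\lambda_L>0\Rightarrow y>0$), contradicting the equilibrium condition $\lambda>0$. Then I would eliminate the boundary cases: $\psi=0$ is ruled out because non-triviality forces $\psi+\lambda>0$, so $\psi=0$ would force $\lambda>0\geq\psi$, violating $\psi\geq\lambda$; and $\lambda=1$ is ruled out because it forces $\psi=1$, hence $x=y=1$, and the low-type payoff becomes $1-2p<0$, contradicting $\lambda=1$.

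The remaining dichotomy is $\psi=\lambda$ versus $\psi>\lambda$. I would rule out $0<\psi=\lambda<1$ by noting that both types must then be indifferent, giving $px=(1-p)y$ and $(1-p)x=py$; subtracting yields $(2p-1)(x+y)=0$, which forces $x=y=0$ and contradicts non-triviality. Therefore $\psi>\lambda$, so $\lambda_H>\lambda_L$ and $x>y$. The high-type payoff then equals $p(x-y)+(2p-1)y>0$, forcing $\psi=1$, and combined with the previous step this yields $\psi=1>\lambda\geq 0$, as claimed.

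The main obstacle I anticipate is the careful bookkeeping of where the non-triviality assumption is invoked---specifically, to turn $\psi=0$ into $\lambda>0$ and to exclude the degenerate solution $x=y=0$---and handling the edge case $B=1$, where $x=y=1$ identically; there the strict inequality $x>y$ is degenerate, but the payoff formulas $2p-1>0$ and $1-2p<0$ still directly deliver the equilibrium $\psi=1,\lambda=0$.
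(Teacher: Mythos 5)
Your proof is correct, but it takes a somewhat different route from the paper's. The paper attacks $x>y$ head-on: it assumes $y\ge x$, observes that the low type's payoff $(1-p)x-py$ is then negative, forces $\lambda=0$, invokes non-triviality to get $\psi>0$, hence $\lambda_H>\lambda_L$ and $x>y$ --- a contradiction; $\psi=1$ and $\lambda<1$ then follow exactly as in your last step. You instead order the mixing probabilities, proving $\psi\ge\lambda$ first and then eliminating the boundary and equality cases. The case $0<\psi=\lambda<1$ is the genuinely new ingredient: your double-indifference argument ($px=(1-p)y$ and $(1-p)x=py$ forcing $x=y=0$, which non-triviality forbids) has no counterpart in the paper, which never has to confront $\psi=\lambda$ because its contradiction hypothesis $y\ge x$ already absorbs that configuration. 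Both proofs run on the same engine --- the signs of the two conditional payoffs and the strict monotonicity of the binomial upper tail $\gamma\mapsto\Pr(\mathrm{Bin}(n-1,\gamma)\ge B-1)$ on $(0,1)$ when $B\ge 2$ --- so neither is more elementary; yours is slightly longer but more systematic about which variable ($\psi$ versus $\lambda$, rather than $x$ versus $y$) drives the contradiction. A genuine merit of your write-up is the explicit treatment of $B=1$: there $x=y=1$ identically, so the lemma's strict inequality $x>y$ literally fails, a degeneracy the paper's proof silently steps over (its inference that $\lambda_H>\lambda_L$ implies $x>y$ is equally invalid at $B=1$), while you correctly note that the substantive conclusion $\psi=1>\lambda=0$ survives there via the sign of $1-2p$.
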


In words the lemma states that in every symmetric non-trivial Bayesian Nash equilibrium the high type player chooses $a_i=1$ with probability $1$, and conditional on player $i$ choosing $a_i=1$, the probability that the product will be supplied is strictly higher in state $\omega=H$ compared with state $\omega=L.$

\begin{proof}
	
	We show first that $x>y$. Assume by contradiction that there exists a non-trivial symmetric equilibrium $\sigma$ in which $y\geq x$. In this case as $p>1/2$, the expected payoff of a low type player from playing $a_i=1$ is $(1-p)x-py<0$. Therefore, under $\sigma$ we have that $\lambda=0$. Since $\sigma$ is a non trivial, symmetric equilibrium of $\Gamma(B,n)$, it must be the case that $\psi>0$.  However when $\lambda=0$ and $\psi>0$ we get that $\lambda_H>\lambda_L$. This in turn implies that $x>y$ which establishes a contradiction.
	
	Since $x>y$ and $p>\frac{1}{2}$ the expected payoff of high type player from taking action $1$ is $px-(1-p)y >0$. On the other hand, by playing $a_i=0$ she gains a payoff of $0$ and thus, high type player will always play $\psi=1$. Finally, note that if $\lambda=1$ then $x=y=1$ which contradicts the inequality $x>y$.
\end{proof}

Let $\sigma^\lambda$ be the strategy profile in which a low type player plays action $1$ with probability $\lambda$ and a high type player plays action $1$ with probability one. By Lemma \ref{lem:H} we know that the equilibrium strategy must be of this form for some $\lambda\in [0,1)$.

We need to show that $\sigma^\lambda$ is an equilibrium strategy for a unique $\lambda$. %Again by Lemma \ref{lem:H} it must be the case that $\lambda\in[0,1)$.
Let $x^\lambda$ and $y^\lambda$ be the probabilities that the threshold $B$ is reached conditional on the state $\omega$ and on $a_i=1$ for $\omega\in H,L$ respectively. In the following proposition we prove that there is at most one value of $\lambda$ in which the indifference condition of the a low type player equals zero. %is  key in establishing the uniqueness of a symmetric non-trivial Bayes-Nash equilibrium:

\begin{proposition}\label{prop:auxth}
	If $(1-p)x^{\lambda'}-py^{\lambda'}=0$, for some $\lambda'\in(0,1)$  then  $(1-p)x^{\lambda'}-py^{\lambda'}>0$ for all $\lambda\in[0,\lambda')$ and  $(1-p)x^{\lambda'}-py^{\lambda'}<0$ for all $\lambda\in[\lambda',1].$
\end{proposition}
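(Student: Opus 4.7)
The plan is to deduce the proposition from the stronger statement that the likelihood ratio
$\varphi(\lambda) := x^\lambda/y^\lambda$ is strictly decreasing in $\lambda$ on $[0,1)$. Granting this, note that $y^\lambda>0$ throughout $[0,1)$ whenever $B\le n$ (the event that all $n-1$ other players receive signal $H$ has positive probability and forces $a_j=1$). One can then factor
\[
(1-p)x^\lambda - py^\lambda \;=\; y^\lambda\bigl[(1-p)\varphi(\lambda) - p\bigr],
\]
so this expression inherits the sign of the bracket, which is strictly decreasing in $\lambda$. Its unique zero at $\lambda'$ is then a single crossing from positive to negative, which is exactly what the proposition asserts. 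The case $B=1$ is vacuous: there $x^\lambda=y^\lambda=1$ and $(1-p)x^\lambda-py^\lambda\equiv 1-2p<0$, so the hypothesis never holds.

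To prove $\varphi$ is strictly decreasing, I would write $x^\lambda = g(\lambda_H)$ and $y^\lambda = g(\lambda_L)$, where $g(q):=Pr(Bin(n-1,q)\ge B-1)$, $\lambda_H = p+(1-p)\lambda$ and $\lambda_L = (1-p)+p\lambda$. Differentiating the logarithm and applying the chain rule (using $d\lambda_H/d\lambda = 1-p$ and $d\lambda_L/d\lambda = p$) gives
\[
\frac{d}{d\lambda}\log \varphi(\lambda) \;=\; (1-p)\frac{g'(\lambda_H)}{g(\lambda_H)} \;-\; p\frac{g'(\lambda_L)}{g(\lambda_L)}.
\]
A one-line telescoping computation on the binomial CDF yields the identity $g'(q) = \frac{B-1}{q}\,Pr(X_q = B-1)$ with $X_q \sim Bin(n-1,q)$, so
\[
\frac{g'(q)}{g(q)} \;=\; \frac{B-1}{q}\,\alpha(q),\qquad \alpha(q):=Pr(X_q=B-1\mid X_q\ge B-1),
\]
and strict negativity of the log-derivative reduces to the inequality $\tfrac{1-p}{\lambda_H}\alpha(\lambda_H) < \tfrac{p}{\lambda_L}\alpha(\lambda_L)$.

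My plan to establish this inequality is to split it into two independent factor comparisons. First, an elementary calculation gives $p\lambda_H - (1-p)\lambda_L = p^2 - (1-p)^2 = 2p-1 > 0$, hence $\tfrac{1-p}{\lambda_H}<\tfrac{p}{\lambda_L}$ strictly. Second, since $\lambda_H>\lambda_L$, the binomial family enjoys the monotone likelihood ratio property, so the conditional law of $X_{\lambda_H}$ given $\{X_{\lambda_H}\ge B-1\}$ stochastically dominates that of $X_{\lambda_L}$ given $\{X_{\lambda_L}\ge B-1\}$; in particular the conditional mass at the smallest possible value $B-1$ is weakly smaller under $\lambda_H$, i.e.\ $\alpha(\lambda_H)\le \alpha(\lambda_L)$. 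Multiplying the strict and weak inequalities between positive quantities yields the required strict inequality.

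The main technical step will be the MLRP/stochastic-dominance comparison of the two conditional endpoint masses (which is what prevents the analogous attempt to show $G$ itself is monotone from working, e.g.\ when $B=n$ and $n$ is large one can check directly that $G'(0)>0$, so the reduction to the ratio $\varphi$ is essential). The telescoping identity for $g'$ and the algebraic comparison between $\tfrac{1-p}{\lambda_H}$ and $\tfrac{p}{\lambda_L}$ are routine bookkeeping.
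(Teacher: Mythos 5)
Your argument is correct, but it takes a genuinely different route from the paper's. The paper works directly with $g(\lambda)=(1-p)x^{\lambda}-py^{\lambda}$ and proves only the local statement that $g(\lambda')=0$ forces $g'(\lambda')<0$: assuming $g'(\lambda')\ge 0$, it uses the closed form of the derivative of the binomial tail to deduce that a single signed term $\eta(B-2,n-1)$ is positive, shows that positivity propagates from $\eta(k,\cdot)$ to $\eta(k+1,\cdot)$, and concludes that $g(\lambda')$ is a sum of positive terms, a contradiction. You instead prove the stronger global statement that the ratio $x^{\lambda}/y^{\lambda}$ is strictly decreasing, splitting its logarithmic derivative into the elementary comparison $p\lambda_H-(1-p)\lambda_L=2p-1>0$ and the comparison of conditional endpoint masses $\alpha(\lambda_H)\le\alpha(\lambda_L)$, which follows from the monotone likelihood ratio property of the binomial family (preserved under conditioning on the common upper set $\{X\ge B-1\}$, hence giving first-order dominance of the conditional laws and a smaller mass at the minimal point $B-1$). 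I checked the pieces: the identity $g'(q)=\frac{B-1}{q}\Pr(X_q=B-1)$ matches the paper's Feller-based formula for $\varphi'$, the factorization through $y^{\lambda}>0$ is legitimate for all $B\le n$, the case $B=1$ is correctly dismissed, and your remark that $g$ itself can have $g'(0)>0$ when $B=n$ correctly identifies why one must pass to the ratio rather than argue global monotonicity of $g$ (the paper sidesteps the same obstruction by arguing only at a zero of $g$). Your version buys a cleaner, more conceptual single-crossing statement and avoids the contradiction argument; the paper's is more elementary and self-contained, at the cost of the somewhat opaque $\eta$-propagation step. Two cosmetic points: your $\varphi$ clashes with the paper's use of $\varphi$ for the binomial survival function, and the proposition as stated has typos (it writes $\lambda'$ where $\lambda$ is meant in the two conclusions, and the sign at the endpoint $\lambda=\lambda'$ should be an equality), which both your proof and the paper's implicitly correct.
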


\begin{proof}[\textbf{Proof of Proposition \ref{prop:auxth}}]
	
	Recall that
	$x^\lambda=Pr_{\sigma^\lambda}(\sum_{j\neq i} a_i \geq B-1|\omega=H,a_i=1)$ and
	$\sum_{j\neq i} a_i \sim Bin(n-1,\lambda_H(\lambda))$,
	where $\lambda_H(\lambda)=p+(1-p)\lambda$.
	Similarly,
	$y^\lambda=Pr_{\sigma^\lambda}(\sum_{j\neq i} a_i \geq B-1|\omega=L,a_i=1)$, and $\sum_{j\neq i} a_i \sim Bin(n-1,\lambda_L (\lambda))$ where $\lambda_L(\lambda)=p\lambda+(1-p)$.
	
	For every $\gamma\in[0,1]$,  we define an auxiliary random variable $z(\gamma)\sim Bin(n-1,\gamma)$.  Let $\varphi(\gamma)= Pr(z(\gamma)\geq B-1)$ and thus we can write $x^\lambda=\varphi(\lambda_H(\lambda))$ and
	$y^\lambda=\varphi(\lambda_L(\lambda)).$
	
	We let $g(\lambda)$ be the expected payoff of a low type player $i$ from playing $a_i=1$, when the other  players follow the strategy $\sigma^\lambda$. By definition,
	\begin{equation}\label{eq:g1}
	g(\lambda)=(1-p)x^\lambda-py^\lambda=(1-p)\varphi(\lambda_H(\lambda))-p\varphi(\lambda_L(\lambda)).
	\end{equation}
	We calculate the derivative of equation \eqref{eq:g1},
	\begin{equation}\label{eq:deriv_g}
	\begin{split}
	&g'(\lambda)=(1-p)\varphi'(\lambda_H(\lambda))\lambda_H'(\lambda)-p\varphi'(\lambda_L(\lambda))\lambda_L'(\lambda)=\\
	&=(1-p)^2 \varphi'(\lambda_H(\lambda))-p^2\varphi'(\lambda_L(\lambda))
	.\end{split}
	\end{equation}
	
	To prove the proposition we shall show that if there exist $\lambda'\in[0,1)$ such that $g(\lambda')=0$, then  $g'(\lambda')<0$.
	This implies first that there exists at most one such $\lambda'$ and second, if such $\lambda'$ exists, then it holds that $g(\lambda)<0$ for every $\lambda>\lambda'$ and $g(\lambda)>0$
	for every $\lambda<\lambda'$ and thus completes the proof of the proposition.
	
	Assume by contradiction that there exists some $\lambda'\in(0,1)$ such that $g(\lambda')=0$ and $g'(\lambda')\geq 0$.
	First we calculate $\varphi'(\lambda)$. By \cite{Feller1968},
	\begin{equation}\label{eq:varphi}
	\varphi(\lambda)=(n-1)\binom{n-2}{B-2}\int_0^{\lambda} t^{B-2}(1-t)^{n-B}dt.
	\end{equation}
	Therefore,
	\begin{equation}\label{eq:varphi_deriv}
	\varphi'(\lambda)=(n-1)\binom{n-2}{B-2}\lambda^{B-2}(1-\lambda)^{n-B}
	\end{equation}
	and thus $\varphi'(\lambda)>0$ for all $\lambda\in(0,1)$.

	We assume $g'(\lambda')\geq 0$ and thus,
	\begin{equation}\label{eq:g_tag_1}
	g'(\lambda')=(1-p)^2 \varphi'(\lambda_H(\lambda'))-p^2\varphi'(\lambda_L(\lambda'))\geq 0.
	\end{equation}
	Since $\frac{1}{2}<p<1$ and $g'(\lambda')>0$ we can divide both sides of equation \eqref{eq:g_tag_1} by $p$ and get
	\begin{equation}\label{eq:g_tag_1}
	0\leq g'(\lambda')\leq \frac{(1-p)}{p}(1-p) \varphi'(\lambda_H(\lambda'))-p\varphi'(\lambda_L(\lambda')).
	\end{equation}
	As $\frac{1-p}{p}<1$ and $\varphi'(\lambda')>0$ the following inequality holds,
	\begin{equation}\label{eq:dirv_g_ubound}
	0\leq g'(\lambda')< (1-p) \varphi'(\lambda_H(\lambda'))-p\varphi'(\lambda_L(\lambda')).
	\end{equation}
	
	By equation \eqref{eq:varphi_deriv}, equation \eqref{eq:dirv_g_ubound} becomes,
	\begin{equation}\label{eq:deriv_g1}
	\begin{split}
	&0\leq g'(\lambda')<
	(n-1)\binom{n-2}{B-2}\left((1-p)\lambda_H(\lambda')^{B-2}(1-\lambda_H(\lambda'))^{n-B}-p\lambda_L(\lambda')^{B-2}(1-\lambda_L(\lambda'))^{n-B}\right)<\\
	&<C\eta(n-1,B-2)
	\end{split}
	\end{equation}
	Where $C=\frac{1}{1-\lambda_H(\lambda')}(n-1)\binom{n-2}{B-2}$ and $$\eta(k,n-1)=(1-p)\lambda_H(\lambda')^{k}(1-\lambda_H(\lambda'))^{n-1-k}-p \lambda_L(\lambda')^{k}(1-\lambda_L(\lambda'))^{n-1-k}.$$
	Therefore $\eta(n-1,B-2)>0$.
	
	Next we claim that if $\eta(k,n-1)>0$ it must be the case that $\eta(k+1,n-1)>0$ as well,
	\begin{equation}\label{eq:16}
	\begin{split}
	&\eta(k+1,n-1)=(1-p)\lambda_H(\lambda')^{k+1}(1-\lambda_H(\lambda'))^{n-1-(k+1)}-p\lambda_L(\lambda')^{k+1}(1-\lambda_L(\lambda'))^{n-1-(k+1)}=\\
	&\frac{\lambda_H(\lambda')}{1-\lambda_H(\lambda')}(1-p)\lambda_H(\lambda')^k(1-\lambda_H(\lambda'))^{n-1-k}-\frac{\lambda_L(\lambda')}{1-\lambda_L(\lambda')} p \lambda_L(\lambda')^k(1-\lambda_L(\lambda'))^{n-1-k} \geq \\ &\geq\frac{\lambda_L(\lambda')}{1-\lambda_L(\lambda')}\eta(k,n-1)
	.\end{split}
	\end{equation}
	Where the last inequality holds as $\lambda_H(\lambda')>\lambda_L(\lambda')$ and  due the function $\frac{x}{1-x}$ increases monotonically  for $x\in(0,1)$.
	
	The assumption $g'(\lambda')\geq 0$ entails that $\eta(B-2,n-1)>0$ and thus by equation \eqref{eq:16},
	$$
	g(\lambda')=\sum_{k=B-1}^{n} \binom{n-1}{k} \eta(k,n-1)>
	\sum_{k=B-1}^{n} \binom{n-1}{k} \eta(B-2,n-1)>0,
	$$
	in contradiction to $g(\lambda')=0$.
\end{proof}

By Proposition \ref{prop:auxth}, for every $B,n$, there can be at most one value for which the low type player is indifferent between playing the action $1$ and the action $0$ in  $\Gamma(B,n)$. The aforementioned $\lambda'$ is a good candidate for the probability that a low type player will assign to the action $1$ in a symmetric, non-trivial equilibrium.
We now turn to complete the proof of Theorem \ref{thm:unique_eq}. We  show that for every $B,n$, in all symmetric, non-trivial equilibria of $\Gamma(B,n)$, the low type player either plays the pure action $0$ or plays a mixed strategy in which she assigns a probability of $\lambda'$ to the action $1$. Furthermore we show that these cases are mutually exclusive and thus there is exactly one such equilibrium in $\Gamma(B,n)$.

\begin{proof}[\textbf{Proof of Theorem \ref{thm:unique_eq}}]
	We split the proof into two cases.
	% In the first case we assume that there is no such $\lambda'$ exists, i.e.,  $(1-p)x^\lambda-py^\lambda\neq 0$ for every $\lambda$. In the second case  we assume that there exist a $\lambda'\in(0,1)$ for which $(1-p)x^\lambda-py^\lambda= 0$ (By proposition \ref{prop:auxth}, there can be at most one such $\lambda'$).
	First assume that  $(1-p)x^\lambda-py^\lambda\neq 0$ for every $\lambda$.  The continuity of $x^\lambda$ and $y^\lambda$ establishes that for every $\lambda$, either $(1-p)x^\lambda-py^\lambda< 0$ or $(1-p)x^\lambda-py^\lambda>0$. For $\lambda=1$ and every $B,n$ we have that  $x^1=y^1=1$. The expected utility of a low type player is therefore $(1-p)x^1-py^1=1-2p<0$, implying that the former inequality holds, and so the payoff of the low type player when taking action $1$ is always negative. We can therefore conclude that $\sigma^0$ is indeed the unique equilibrium.

	In the complementary case, there exists $\lambda'\in(0,1)$ such that  $(1-p)x^{\lambda'}-py^{\lambda'}=0$. In $\sigma^{\lambda'}$ the high type player takes the pure  action $1$ and receives a non-negative utility. Therefore, she cannot gain from deviating. On the other hand, the equality $(1-p)x^{\lambda'}-py^{\lambda'}=0$ implies that the low type player mixes between the two available pure actions and thus is indifferent between them. Therefore she cannot profitably deviate as well. This establishes the fact that $\sigma^{\lambda'}$ is indeed an equilibrium.
	
	To establish uniqueness we note that whenever  $\lambda<\lambda'$  the low type player receives positive expected payoff from taking the pure action $1$ in the strategy profile $\sigma^\lambda$ and can profitable deviate to the pure action $1$. Similarly whenever $\lambda>\lambda'$ the low type player receives a negative payoff in the strategy profile $\sigma^\lambda$ and so can  profitably deviate to the pure action $0$.
\end{proof}

\subsection{Proof of Lemma \ref{lem:asympt_probs}}

Let $q\in[0,1]$, and let $\{B_n\}_{n=1}^\infty$ be a sequence satisfying $\lim_{n\rightarrow\infty}\frac{B_n}{n}=q.$ Consider the corresponding sequence of games $\{\Gamma(B_n,n)\}_{n=1}^\infty$ .

By Theorem \ref{thm:unique_eq}, for every $\Gamma(B_n,n)$ there exists a unique equilibrium $\sigma^n$ of the game $\Gamma(B_n,n)$. Let $\sigma^n_i(L)=\lambda_n$ (i.e., $\lambda_n$ denotes the probability that a low type  player assigns to the action $1$).
By taking a sub sequence if neccessary,  we can assume that  the limit of $\{\lambda_n\}_{n=1}^{\infty}$ exists and equals $\lambda$, i.e. $$\lim_{n\rightarrow}\lambda_n=\lambda.$$
For any $\lambda_n$, let $\lambda_\omega(\lambda_n)$ denote the  probability that a player chooses  action $1$ conditional on  state  $\omega$, that is
\begin{equation}\label{eq:def_lambda_state}
\lambda_\omega(\lambda_n)=\begin{cases}p+(1-p)\lambda_n&\mbox{ if }\omega=H\\ (1-p)+p\lambda_n&\mbox{ if }\omega=L
\end{cases}.
\end{equation}
We consider two cases.
%The reminder of the proof is divided into two distinct cases. In the first case we assume that $q\le 1-p$ and in the second $q>1-p$.

\textbf{Case 1. $q \leq 1-p:~~$}
First we show that $\lambda=0$. Assume by contradiction that $\lambda>0$. By equation \eqref{eq:def_lambda_state}
$$\lambda_H(\lambda)>\lambda_L(\lambda)\Rightarrow x_n>y_n.$$
Where $x_n$ is the probability that the good is supplied conditional on on state $\omega=H$ and $a_i=1$;  and $y_n$  is the probability that the good is supplied conditional on on state $\omega=L$ and $a_i=1$,  as defined in Definition \ref{def:xnyn}. %$$x_n=Pr_{\sigma^n}(\sum_{i=1}^{n-1} a_i\ge B_n-1|\omega=H)\mbox{ and  }y_n=Pr_{\sigma^n}(\sum_{i=1}^{n-1} a_i\ge B_n-1|\omega=L)$$ ($x_n,y_n$ are the probabilities that there will be at least $B_n-1$ players who choose action $1$ conditional on the state being $\omega=H,L$ respectively).
By Theorem \ref{thm:unique_eq} high type players surely commits and thus for large values of $n$ we have $$y_n>1-p>q.$$
By the law of large numbers
$$
\lim_{n\rightarrow\infty} y_n=\lim_{n\rightarrow\infty} Pr_{\sigma^n}(\sum_{i=1}^{n-1} a_i/(B_n-1)>=1|\omega=L)=
\lim_{n\rightarrow\infty} Pr_{\sigma^n}(\sum_{i=1}^{n-1} a_i/n>=q|\omega=L)=1.
$$
%$$Pr_{\sigma^n}\left(\frac{\sum a_i}{n}>q|a_=1,\omega=L\right)\underset{n\rightarrow \infty}{\longrightarrow} 1,$$
Therefore since $x_n\ge y_n$ we have that for all large enough values of $n$
\begin{equation}\label{eq:neg_payoff}
(1-p)x_n-p y_n<0.
\end{equation}
%That is the expected payoff of the low type player from playing the action $1$ is  strictly negative.

As previously discussed,  equation \eqref{eq:neg_payoff} represents the expected low type payoff from playing the pure action$a=1$.  Hence, for large enough $n$ playing $a=1$ with positive probability yields a negative expected payoff to the low type.
This stands in contradiction to $\lambda=\lim_{n} \lambda_n>0$ hence we must have that $\lambda=0$.

%However under the contrary assumption  $\lim_{n\rightarrow\infty}\lambda_n=\lambda>0.$ Therefore $\lambda_n>0$ for sufficiently large $n$. By equation \eqref{eq:neg_payoff} playing $\sigma_i^n(a_1=1|s_i=L)=\lambda_n$ yields a negative expected utility for the low type player in $\Gamma(B_n,n)$. Note that the low type player has a profitable deviation (for example she can secure zero utility by playing the pure action $0$) and thus we have a contradiction with the assumption that  $\sigma_i(a_i=1|s_=L)=\lambda_n$ is an equilibrium strategy. Therefore $q<1-p$ implies $\lambda=0$.

\textbf{Case 2. $q \ge 1-p:~~$}
As before $\lambda=\lim_{n\rightarrow\infty} \lambda_n$. Assume by way of contradiction that $\lambda\neq q-(1-p)/p$.  We consider two sub-cases:

%Again we divide this case into two distinct sub cases. One where $\lambda>\frac{q-(1-p)}{p}$ and the other where $\lambda<\frac{q-(1-p)}{p}$.

\textbf{Case 2.1.}
%Assume to the contrary that
\begin{equation}\label{eq:limiting}
\lambda>\frac{q-(1-p)}{p}>0.
\end{equation}
In this case,   $\lambda_n>0$ for all large enough values of $n$. Therefore,
% As $\lambda_n$ eventually becomes positive, by equation \eqref{eq:def_lambda_state}
\begin{equation}\label{eq:ttt11}
\lim_{n\rightarrow\infty}\lambda_L(\lambda_n)=\lambda_L(\lambda)=(1-p)+p\lambda>q
\end{equation}
and
\begin{equation}\label{eq:ttt12}
\lambda_H(\lambda_n)>\lambda_L(\lambda_n).
\end{equation}
Equation \eqref{eq:ttt12} entails $x_n> y_n.$ Together with equation \eqref{eq:ttt11} it yields

\begin{equation}\label{eq:lim}
\lim_{n\rightarrow\infty} y_n=\lim_{n\rightarrow\infty} Pr_{\sigma^n}(\sum_{i=1}^{n-1} a_i/(B_n-1)>=1|\omega=L)=
\lim_{n\rightarrow\infty} Pr_{\sigma^n}(\sum_{i=1}^{n-1} a_i/n>=q|\omega=L)=1.
%\lim_{n\rightarrow\infty}y_n=
%Pr_{\sigma^n}\left(\frac{\sum_{i=1}^{n-1} a_i}{n}>q|a=1,\omega=L\right)= 1.
\end{equation}
By equation \eqref{eq:lim}  the expected payoff of a low type player from playing the action $a=1$ eventually becomes strictly negative, i.e.  $$(1-p)x_n-p y_n<0.$$
Therefore, as in Case 1 we get that for large values of $n$ the low type players can profitably deviate and play $a=0$ with probability one.

%We assume  $\lambda_n$ is an equilibrium strategy for a low type player in the corresponding $\Gamma(B_n,n).$ Yet when all players play $\sigma^n$, a low type player can secure a higher expected utility by deviating and playing $\tilde\lambda_n=0,$ a contradiction.

\textbf{Case 2.2.}  $\lambda<\frac{q-(1-p)}{p},$  The fact that $\lambda_n<\frac{q-(1-p)}{p}$ imply that  $\lambda_L(\lambda_n)$ eventually becomes smaller then $q.$   Hence by the by the Law of Large Numbers,% we get
$$\lim_{n\rightarrow\infty} y_n=\lim_{n\rightarrow\infty}Pr_{\sigma^n}(\sum_{i=1}^{n-1} a_i/(B_n-1)>=1|\omega=L)=
\lim_{n\rightarrow\infty}Pr_{\sigma^n}(\sum_{i=1}^{n-1} a_i/n>=q|\omega=L)=0. $$

Since $\lim_{n\rightarrow\infty}\lambda_H(\lambda_n)= \lambda_H(\lambda)>\lambda_L(\lambda)$ we have that $x_n>y_n$ for all sufficiently large $n$.

Therefore we must also have that
$\lim_{n\rightarrow\infty} x_n=1$
%  A simple Binomial calculation yields
$$\lim_{n\rightarrow\infty}\frac{Pr_{\sigma^n}(\sum_{i=1}^n a_i\geq B|a_i=1,\omega=L)}{Pr_{\sigma^n}(\sum_{i=1}^n a_i\geq B|a_i=1,\omega=H)}=\lim_{n\rightarrow\infty}\frac{y_n}{x_n}=0.$$
Since  $p>1/2$ we get that for all sufficiently large $n$
$$(1-p)x_n-py_n>0.$$

Once more we have reached a contradiction as there exist a profitable deviation to action $a=1$ for a low type player in the corresponding $\Gamma(B_n,n)$ for large $n$. %(by deviating to the pure action $1$), and thus $\sigma^n(L)=\lambda_n$ can not be an equilibrium strategy. QED.
The above two sub-cases show that $\lim_{n\rightarrow\infty}\lambda_n= [q-(1-p)]/p$
whenever $q\ge 1-p.$
\subsection{Proof of Lemma \ref{lem:diff_is_zero}}
Let $q\in[0,1]$ and $\{B_n\}_{n=1}^\infty$ be a sequence of thresholds with $\lim_{n\rightarrow\infty}\frac{B_n}{n}=q$. For every game in the sequence  $\{\Gamma(B_n,n)\}_{n=1}^\infty$
%. By Theorem \ref{thm:unique_eq}, for every $\Gamma(B_n,n)$ there is a
corresponds  unique non-trivial Bayesian equilibrium  $\sigma^n$. Let  $\sigma^n_i(L)=\lambda_n$ denote the probability that a low type  player assigns to the action $1$ in $\sigma^n$.  By taking a sub sequence if necessary, We can assume that  the following limit exists %and equals $\lambda$, i.e.
$$\lim_{n\rightarrow}\lambda_n=\lambda.$$

First assume that $\lambda<1$. Let $Pr_{\sigma^n}$ be the probability distribution  induced by $\sigma^n$ and recall that
\begin{equation*}
\begin{split}
&x_n=Pr_{\sigma^n}(\sum_{j=1}^{n-1} a_j\ge B_n-1|\omega=H)=Pr_{\sigma^n}(\sum_{j=1}^{n-1} a_j+1\ge B_n|\omega=H)=\\
&Pr_{\sigma^n}(\frac{\sum_{j=1}^{n-1} a_j}{n}+\frac{1}{n}\ge \frac{B_n}{n}|\omega=H).
\end{split}
\end{equation*}
Therefore
\begin{equation*}
\begin{split}
&\lim_{n\rightarrow\infty} x_n-Pr_{\sigma^n}(\sum_{i=1}^n a_i\geq B_n|\omega=H)=\\
&\lim_{n\rightarrow\infty}Pr_{\sigma^n}(\frac{\sum_{j=1}^{n-1} a_j}{n}+\frac{1}{n}\geq q|\omega=H,a_1=1)-Pr_{\sigma^n}(\frac{\sum_{i=1}^n a_i}{n}\geq q|\omega=H)=0.
\end{split}
\end{equation*}
A similar consideration holds for $y_n$ for $\omega=L$.

In the complementary case where $\lambda=1$ we have that
\begin{align*}
&\lim_{n\rightarrow\infty} Pr_{\sigma^n}(\sum_{i=1}^n a_i\geq B_n|\omega=H)=\\
&\lim_{n\rightarrow\infty} \lambda_nPr_{\sigma^n}(\sum_{i=1}^n a_i\geq B_n|\omega=H,a_1=1)+(1-\lambda_n)Pr_{\sigma^n}(\sum_{i=1}^n a_i\geq B_n|\omega=H,a_i=0)=x_n.
\end{align*}
The last equality holds by the definition of $x_n$ and as $\lim_{n\rightarrow\infty}\lambda_n=1$.
Similar consideration holds with respect to $y_n$ for $\omega=L.$

\subsection{Proof of Theorem \ref{thm:asym_correctness}}
\begin{proof}
	First we show that $\lim_{n\rightarrow\infty} \max_{B\in\{1\dots n+1\}} \theta(B,n) \ge \frac{3p-1}{2p}.$
	Consider the sequence $$\{B_n\}_{n=1}^{\infty}\mbox{  such that }B_n=\frac{n}{2}$$ and the corresponding sequence of games $\{\Gamma(\frac{n}{2},n)\}_{n=1}^\infty$ with $\sigma^n$ as the unique symmetric non-trivial equilibrium.
	
	As $\lim_{n\rightarrow\infty} \frac{B_n}{n}=\frac{1}{2}$ by Lemma \ref{lem:asympt_probs} and since $p>\frac{1}{2}$ we get  $$\lim_{n\rightarrow\infty}\sigma^n(L)=\lim_{n\rightarrow\infty} \lambda_n = \frac{\frac{1}{2}-(1-p)}{p}= \frac{2p-1}{2p}.$$
	Therefore $\lambda_H(\lambda_n)$ converges to $\frac{3p-1}{2p}>\frac{1}{2}$ which by the law of large numbers implies $\lim_{n\rightarrow\infty} x_n =1$. Since   $\lambda_n$ eventually becomes strictly positive,  the indifference condition of the low-type player implies that
	\begin{equation}\label{eq:low_ind1}
	(1-p)x_n-py_n=0.
	\end{equation}
	$x_n$ converges to $1$ and thus by equation \eqref{eq:low_ind1} we get $$\lim_{n\rightarrow\infty} y_n = \frac{1-p}{p}.$$
	By Lemma \ref{lem:diff_is_zero} we show that,
	$$\lim_{n\rightarrow\infty}\Theta(\frac{n}{2},n)=\frac{1}{2}+\frac{1}{2}(1-\frac{1-p}{p})=\frac{3p-1}{2p}.$$
	And thus
	$\lim_{n\rightarrow\infty} \max_{B\in\{1\dots n+1\}} \theta(B,n) \ge \frac{3p-1}{2p}.$
	%Since for $q=\frac{1}{2}$, the correctness converges to $\frac{3p-1}{2p}$, it must be the case that
	%$$\lim_{n\rightarrow\infty} \max_{B\in\{1\dots n+1\}} \theta(B,n) \ge \frac{3p-1}{2p}.$$

	Next we show the opposite inequality holds as well , i.e. $$\lim_{n\rightarrow\infty} \max_{B\in\{1\dots n+1\}} \theta(B,n) \le \frac{3p-1}{2p}.$$
	Consider a sequence $\{B_n\}_{n=1}^\infty$ of thresholds such that the sequences
	$\{\theta(B_n,n)\}_{n=1}^\infty, \{x_n\}_{n=1}^\infty$ and $\{y_n\}_{n=1}^\infty$ converge to  $\theta^*, x^*$ and $y^*$ respectively.
	
	By definition of the correctness index and Lemma \ref{lem:diff_is_zero},
	\begin{equation}\label{eq:lim_xnyn}
	\lim_{n\rightarrow\infty} \theta(B_n,n) = \frac{1}{2}x^* + \frac{1}{2}(1-y^*).
	\end{equation}

	We show next that $\frac{3p-1}{2p}$ is the best possible asymptotic correctness value.
	Let $\{(B^*_n,n)\}_{n=1}^{\infty}$ be a sequence of thresholds that achieves the best asymptotic correctness probability. By taking subsequence we can assume that $\lim_{n\rightarrow\infty} \frac{B^*_n}{n}=q^*$ for some $q^*\in[0,1]$.
	
	First we note that $q^>(1-p)$. Assume to the contrary that $q\leq 1-p$. By Theorem \ref{thm:unique_eq} for all $B^*_n,n$ and in the corresponding $\Gamma(B^*_n,n)$, high type players surely choose action $1$ in non-trivial equilibrium and thus by the Law of large numbers,
	\begin{equation}
	\lim_{n\rightarrow\infty} x_n = \lim_{n\rightarrow\infty} y_n= 1.
	\end{equation}
	By Lemma \ref{lem:diff_is_zero} and by the definition of the asymptotic correctness index we have
	\begin{equation}
	\lim_{n\rightarrow\infty} \theta(B^*_n,n)=\frac{1}{2}.
	\end{equation}
	And thus we have reached a contradiction as the sequence $\{\frac{n}{2}\}_{n=1}^{\infty}$ yields an asymptotic correctness value of $\frac{3p-1}{2p}>\frac{1}{2}$ which is greater then the asymptotic correctness value  of the sequence $\{B^*_n\}_{n=1}^{\infty}$.
	And thus it must be the case that  $q^*\in(1-p,1]$.
	
	As $q^*\in(1-p,1]$, by Lemma \ref{lem:asympt_probs} we get that $\sigma^n_i(L)=\lambda_n$ is positive from some $n$ and on, and thus by the indifferent condition of a low type player we get
	\begin{equation}\label{eq:L_indif}
	(1-p)x_n-py_n=0.
	\end{equation}

	By taking another subsequence if necessary, we can assume that $\{x_n\}$ and $\{y_n\}$ converges to the limits $x^*$ and $y^*$ respectively.
	By Lemma \ref{lem:diff_is_zero}, $$\lim_{n\rightarrow\infty} \max_{B\in\{1\dots n\}} \theta(B_n,n) =px^*+(1-p)(1-y^*).$$
	And  by equation \eqref{eq:L_indif} $$(1-p)x^*-py^*=0.$$
	Therefore the asymptotic correctness is bounded by the linear programming below.
	\begin{equation*}
	\begin{aligned}
	& {\text{max}}
	& & \frac{1}{2}x+\frac{1}{2}(1-y) \\
	& \text{s.t.} & & 1\geq x,y\geq 0 \\
	& & &  (1-p)x-py=0. \\
	\end{aligned}
	\end{equation*}
	Which is simply
	$$\max_{x\in[0,1]}\frac{1}{2}x+\frac{1}{2}(1-\frac{1-p}{p}x)=\max_{x\in[0,1]}\frac{2px+p-x}{2p}=\frac{3p-1}{2p}.$$
\end{proof}

\subsection{Proof of Theorem \ref{thm:revenue}}

%The outline of the proof is as follows.
First we show that the firm can achieve a Market penetration index of $\frac{1}{2p}$.
Let $q\in[(1-p),1)$. Consider a sequence of thresholds $\{B_n\}_{n=1}^{\infty}$ such that $B_n=q n$ and the corrsponding sequence of games $\{\Gamma(q n,n)\}_{n=1}^{\infty}$ with $\{\sigma^n\}$ as the unique symmetric non-trivial equilibrium.
Let $\lambda_n=\sigma^n_i(L)$ and let $x_n,y_n$ be as in Definition \ref{def:xnyn}.

By Lemma \ref{lem:asympt_probs}, for $q>1-p$
$$\lim_{n\rightarrow\infty}\lambda_n=\lambda=\frac{q-(1-p)}{p}>0$$
%$\lambda_n$ eventually becomes strictly positive
and thus for sufficiently large $n$, the low type player is indifferent between playing either $a=1$ or $a=0$ . Therefore
\begin{equation}\label{eq:ind_low_type}
(1-p)x_n-py_n=0
\end{equation}
and
\begin{equation}\label{eq:lim_lambda_h}
\lim_{n\rightarrow\infty}\lambda_H(\lambda_n)\frac{q(1-p)+2p-1}{p}>q>0.
\end{equation}
By the Law of large numbers, equation \eqref{eq:lim_lambda_h} entails that $\lim_{n\rightarrow\infty}x_n=1.$ This, together with equation \eqref{eq:ind_low_type}, yields that  the sequence $\{y_n\}_{n=1}^\infty$ converges to $\frac{1-p}{p}.$ I.e.,
\begin{equation}\label{eq:xnyn_limits}
\begin{split}
&x_n\underset{n\rightarrow \infty}{\longrightarrow} 1\\
&y_n\underset{n\rightarrow \infty}{\longrightarrow} \frac{1-p}{p}.\\
\end{split}
\end{equation}
By Lemma \ref{lem:diff_is_zero}, equation \eqref{eq:xnyn_limits} and the definition of the Market Penetration index we get that for every $q\in[(1-p),1)$
$$\lim_{n\rightarrow\infty}R(qn,n)=\frac{1}{2}+\frac{1}{2}(\frac{1-p}{p})=\frac{1}{2p}.$$
Therefore for large values of $n$,  $R(qn,n)\geq\frac{q}{2p}$ and thus the firm can guarantee a market penetration index of $\frac{1}{2p}$ by taking $q$ to one.

To complete the proof we show that $\frac{1}{2p}$ is indeed the firm's optimal asymptotic  Market penetration index. Consider a sequence of  thresholds $\{B^*_n\}_{n=1}^{\infty}$, the corresponding sequence of games $\Gamma(B_n,n)$ with  $\{\sigma^n\}_{n=1}^{\infty}$ as the unique symmetric non-trivial  equilibrium for which the maximum in Equation \eqref{eq:revenue} is attained.
Let $\lambda_n=\sigma^n_i(L)$ and $x_n,y_n$ be as defined in Definition \ref{def:xnyn}. By taking a subsequence if necessary we can assume that the following limit exists for some $q^*\in[0,1]$,
$$
\lim_{n\rightarrow\infty} \frac{B^*_n}{n}=q^*
$$.

Next we prove that $q^*\in[1-p,1)$. Assume by contradiction that $q^*<1-p$ and thus by Lemma \ref{lem:asympt_probs}  $\lambda_n=0$ for all sufficiently large values of $n$.
In this case $$\lim_{n\rightarrow\infty}\lambda_L(\lambda_n)=1-p>q^*\mbox{ and } \lim_{n\rightarrow\infty}\lambda_H(\lambda_n)=p>q^*,$$
and thus both
$\{x_n\}_{n=1}^{\infty}$ and $\{y_n\}_{n=1}^{\infty}$ converge to $1.$ Therefore, by the definition of the Market penetration
$$\lim_{n\rightarrow\infty}R(B^*_n,n)=\frac{1}{2}(1-p)+\frac{1}{2}p=\frac{1}{2}<\frac{1}{2p}.$$
We have reached a contradiction as we show above that the value $\frac{1}{2p}$ is achievable. %Therefore we conclude that the following must hold at optimal sequence of thresholds
%$$\lim_{n\rightarrow\rightarrow\infty}\frac{B^*_n}{n}=q^*\in[1-p,1).$$

As $q^*\in[1-p,1)$ and  by Lemma \ref{lem:asympt_probs} for sufficiently large values of $n$,  $\lambda_n>0$. By taking sub-sequence we can assume that $\lambda_n>0$ for all $n$.
In this case, by the indifference condition of the low type player at the corresponding $\Gamma(B^*_n,n)$, and for every $n$  we have that
$$(1-p)x_n-py_n=0.$$

By taking a subsequence if necessary, we can assume that $\{x_n\}_{n=1}^{\infty}$ and $\{y_n\}_{n=1}^{\infty}$ converge to some $x^*$ and $y^*$ respectively and thus
$$(1-p)x^*-py^*=0.$$
Note that the maximal  Market penetration for the firm conditional on producing the product is bounded by one. This upper bound on the Market penetration index, together with Lemma \ref{lem:diff_is_zero} entails
$$\lim_{n\rightarrow\infty} \max_{B\in\{1\dots n\}} R(B,n)\leq \frac{1}{2}x^*+\frac{1}{2}y^*.$$
Therefore, the optimal  market penetration for the firm is bounded by the solution of the following linear programming:
\begin{equation}
\begin{aligned}
& {\text{max}}
& & \frac{1}{2}x+\frac{1}{2}y \\
& \text{s.t.} & & 1\geq x,y\geq 0 \\
& & &  (1-p)x-py=0. \\
\end{aligned}
\end{equation}
Which is simply
$$\max_{x\in[0,1]}\frac{1}{2}x+\frac{1}{2}(\frac{1-p}{p}x)=\max_{x\in[0,1]}\frac{x}{2p}=\frac{1}{2p}.$$
\qed

\end{document}